\theoremstyle{plain}
\newtheorem{theorem}{Theorem}
\newtheorem{proposition}{Proposition}
\newtheorem{lemma}{Lemma}
\newtheorem{corollary}{Corollary}
\theoremstyle{definition}
\newtheorem{definition}{Definition}
\theoremstyle{remark}
\tikzstyle{every state}=[minimum size=1.5em]
\tikzset{>=latex,auto,node distance=2.5cm, every
  loop/.style={looseness=6}, initial text={}, inner sep=0.5mm,
  loopright/.style={loop,looseness=6,out=35, in=-35},
  loopleft/.style={loop,looseness=6,out=145, in=215},
  loopabove/.style={loop,looseness=6,out=125, in=55},
  loopbelow/.style={loop,looseness=6,out=-125, in=-55}, }
\newcommand{\oset}[3][0ex]{%
  \mathrel{\mathop{#3}\limits^{
    \vbox to#1{\kern-2\ex@
    \hbox{$\scriptstyle#2$}\vss}}}}
\DeclareMathOperator{\eventually}{\mathbf{F}}
\DeclareMathOperator{\weventually}{\overline{\mathbf{F}}}
\DeclareMathOperator{\once}{\overleftarrow{\mathbf{F}}}
\DeclareMathOperator{\globally}{\mathbf{G}}
\DeclareMathOperator{\nextx}{\mathbf{X}}
\DeclareMathOperator{\counting}{\mathbf{C}}
\DeclareMathOperator{\pcounting}{\vphantom{\mathbf{C}}\smash{\overleftarrow{\mathbf{C}}}}
\DeclareMathOperator{\pnueli}{\mathbf{P}}
\DeclareMathOperator{\ppnueli}{\vphantom{\mathbf{P}}\smash{\overleftarrow{\mathbf{P}}}}
\DeclareMathOperator{\until}{\mathbin{\mathbf{U}}}
\DeclareMathOperator{\since}{\mathbin{\mathbf{S}}}
\DeclareMathOperator{\release}{\mathbin{\mathbf{R}}}
\newcommand*{\ltl}{\textup{\textmd{\textsf{LTL}}}}
\newcommand*{\onetptl}{\textup{\textmd{\textsf{1-TPTL[$\until{}, \since{}$]}}}}
\newcommand*{\emitl}{\textup{\textmd{\textsf{EMITL}}}}
\newcommand*{\pnemtl}{\textup{\textmd{\textsf{PnEMTL}}}}
\newcommand*{\mitl}{\textup{\textmd{\textsf{MITL}}}}
\newcommand*{\mtl}{\textup{\textmd{\textsf{MTL}}}}
\newcommand*{\msoone}{\textup{\textmd{\textsf{MSO[$<, +1$]}}}}
\newcommand*{\foone}{\textup{\textmd{\textsf{FO[$<, +1$]}}}}
\newcommand*{\mso}{\textup{\textmd{\textsf{MSO[$<$]}}}}
\newcommand*{\fo}{\textup{\textmd{\textsf{FO[$<$]}}}}
\newcommand*{\qtwomlo}{\textup{\textmd{\textsf{Q2MLO}}}}
\newcommand*{\qtwomso}{\textup{\textmd{\textsf{Q2MSO}}}}
\newcommand*{\tlc}{\textup{\textmd{\textsf{TLC}}}}
\newcommand*{\tlci}{\textup{\textmd{\textsf{TLCI}}}}
\newcommand*{\tlp}{\textup{\textmd{\textsf{TLP}}}}
\newcommand*{\tlpi}{\textup{\textmd{\textsf{TLPI}}}}
\newcommand*{\nfa}{\textup{\textmd{\textsf{NFA}}}}
\newcommand*{\dfa}{\textup{\textmd{\textsf{DFA}}}}
\newcommand*\A{\mathcal A} 
\newcommand*\B{\mathcal B} 
\newcommand*\transitions{\Delta}
\newcommand*\R{\mathbb R}
\newcommand*\N{\mathbb N}
\renewcommand*\phi{\varphi}
\newcommand*\AP{\textup{\textmd{\textsf{AP}}}}
\renewcommand*{\figurename}{Fig.}
\title{When Do You Start Counting?\\ \large Revisiting Counting and Pnueli Modalities in Timed Logics}
\author{Hsi-Ming~Ho
\institute{Department of Informatics, University of Sussex\\
United Kingdom}
\email{hsi-ming.ho@sussex.ac.uk}
\and
Khushraj~Madnani
\institute{Max Planck Institute for Software Systems\\
Germany}
\email{kmadnani@mpi-sws.org}
}
\begin{document}
\maketitle

\begin{abstract}
Pnueli first noticed that certain simple `counting' properties appear to be inexpressible in popular timed temporal logics such as
Metric Interval Temporal Logic (\mitl{}). 
This interesting observation has since been studied extensively, culminating in strong timed logics that are capable of expressing such properties yet remain decidable.
A slightly more general case, namely where one asserts the existence of a sequence of events in an arbitrary interval of the form $\langle a, b \rangle$ (instead of an upper-bound interval of the form $[0, b \rangle$, which starts from the current point in time),
has however not been addressed satisfactorily in the existing literature.
We show that counting in $[0, b \rangle$ is in fact as powerful as counting in $\langle a, b \rangle$; moreover, the general property `there exist $x', x'' \in I$ such that $x' \leq x''$ and $\psi(x', x'')$ holds' can be expressed in Extended Metric Interval Temporal Logic (\emitl{}) with only $[0, b \rangle$.

\end{abstract}

\section{Introduction}
\paragraph{Timed logics.}
Temporal logics provide constructs to specify \emph{qualitative} ordering between events in time.
Timed logics extend classical temporal logics with the ability to specify \emph{quantitative} timing constraints between events. \emph{Metric Interval Temporal Logic} (\mitl{})~\cite{AluFed96} is amongst the best studied of timed logics. It extends the `until' ($\until$) and `since' ($\since$) modalities of \emph{Linear Temporal Logic} (\ltl{})~\cite{Pnueli1977} with \emph{non-singular} intervals to specify timing constraints. For example, $P \until_I Q$ states that an event where $Q$ holds should occur in the future within a time interval $I$, and $P$ should hold continuously till then.

\paragraph{Specifying multiple events.}
In many practical scenarios, e.g,~those involving resource-bounded computations, the ability to specify not just one but a sequence of events within a given time interval can be crucial. 
 For example, in a multi-threaded environment, a desired property for scheduling algorithms could be to have at most $k$ context switches in every $M$ time units. Such properties, however, cannot be expressed in \mitl{}~\cite{Bouyer2010, Hirshfeld06, KriMad16}.
In particular, the \emph{counting} ($\counting$ and $\pcounting$) and \emph{Pnueli} ($\pnueli$ and $\ppnueli$) modalities that specify event occurrences within the \emph{next} or \emph{previous} unit interval (i.e.~within $[t_0, t_0+1)$ or $(t_0-1, t_0]$, where the current time is $t_0$) are studied in~\cite{Hirshfeld06}, and it turned out that for \mitl{} extended with these modalities (called \tlc{} and \tlp{}, respectively),
the satisfiability problem remains $\mathrm{EXPSPACE}$-complete.\footnote{The exponential blow-up comes from  the succinct encodings of both constants in intervals of the form $\langle a, b \rangle$ in \mitl{} and constants $k$ in $\counting_I^{k}$; for more details, see~\cite{Rabinovich2010}.} Moreover, it turned out that
\tlc{} and \tlp{}, while the latter is syntactically more general, are equally expressive in the continuous semantics. This is shown by proving that both \tlc{} and \tlp{} are expressively complete for a natural fragment of \emph{Monadic First-Order Logic of Order and Metric} (\foone{}) called \qtwomlo{}, where one can specify that
the sequence of events between the current time $t_0$ and $t \in t_0 + I$ (for a \emph{non-singular} interval $I$) satisfies a first-order formula $\vartheta(x_0, x)$.

\begin{figure}[t]
\centering
\begin{IEEEeqnarray*}{rCl}
   \ltl{} & = & \textsf{Propositional Logic} \cup  \{ \phi_1 \until \phi_2, \phi_1 \since \phi_2  \, \mid \, \phi_1, \phi_2 \in \ltl{} \} \\
   \mitl{} & = & \ltl{} \cup \{ \phi_1 \until_I \phi_2, \phi_1 \since_I \phi_2 \, \mid \, \phi_1, \phi_2 \in \mitl{} \text{, } I = \langle a, b \rangle \text{, } a, b \in \N \cup\{\infty\} \text{, } a < b  \} \\
   \tlc{} & = & \mitl + \{ \counting_I^{k} \varphi, \pcounting_I^{k} \varphi \, \mid \, \varphi \in \tlc{} \text{, } I = [0, b \rangle \text{, } b \geq 1 \text{, } k \geq 1 \} \\
   \tlp{} & = & \mitl + \{ \pnueli^{k}_I \varphi, \ppnueli^{k}_I \varphi \, \mid \, \varphi \in \tlp{} \text{, } I = [0, b \rangle \text{, } b \geq 1 \text{, } k \geq 1 \} \\
   \tlci{} & = & \mitl + \{ \counting_I^{k} \varphi, \pcounting_I^{k} \varphi \, \mid \, \varphi \in \tlci{} \text{, } I = \langle a, b \rangle \text{, } a, b \in \N \cup\{\infty\} \text{, } a < b \} \\
   \tlpi{} & = & \mitl + \{ \pnueli^{k}_I \varphi, \ppnueli^{k}_I \varphi \, \mid \, \varphi \in \tlpi{} \text{, } I = \langle a, b \rangle \text{, } a, b \in \N \cup\{\infty\} \text{, } a < b \}
\end{IEEEeqnarray*}
\caption{Some timed temporal logics considered in this paper. Note that the definitions of \tlc{} and \tlp{} in~\cite{Hirshfeld06} are less general but equally expressive in the continuous semantics.}
\label{fig:zoo}
\end{figure}

\paragraph{Expressiveness.} 
It is of course trivial to see that \qtwomlo{} subsumes $\tlc{}$, but it is unclear (at least to us) whether \qtwomlo{} can express the more general modalities $\counting^k_I$ and their past counterparts, which count event occurrences within arbitrary non-singular intervals $I$ of the form $\langle a, b \rangle$ with $0 \leq a < b$---on the face of it, we seem to need a first-order formula $\vartheta(x_1, x_2)$ along with two quantified instants $t_1, t_2 \in t_0 + I$, which is not allowed by the syntax of \qtwomlo{}.  
In~\cite{Rabinovich2010}, it is claimed (without proof) that in the continuous semantics,
\mitl{} extended with such modalities (\tlci{}) is equally expressive as the fragment with only the most basic versions of the counting modalities
(allowing only $I = (0, 1)$).
By contrast, Krishna \emph{et al.}~\cite{KriMad16} showed that in the pointwise semantics, $\counting^k_I$ with $I = \langle a, b \rangle$ cannot be expressed in the \emph{future} fragment of \tlci{}
with only counting modalities with $I = [0, b \rangle$.
In this paper, we reconcile these results and reaffirm the claim, i.e.~we prove that $\counting^k_I$ with $I = \langle a, b \rangle$ is indeed expressible in (future) \qtwomlo{} in both the pointwise and continuous semantics.
This suggests that \qtwomlo{} is a very expressive and robust logic in both the pointwise and continuous semantics. 
From~\cite{HoMadnani23}, we also know that in the pointwise semantics,
$\counting^k_I$ with $I = \langle a, b \rangle$ is expressible in the fragment of \tlci{} with  (both future and past) counting modalities with $I = [0, b \rangle$. 

\paragraph{Contributions.}
We argue that the folklore
belief---$\counting_I^{k}$ with $I = \langle a, b \rangle$
can be rewritten into formulae using only $\counting_I^{k}$ with $I = [0, b \rangle$ in about the same way as $\until_I$ with $I = \langle a, b \rangle$
can be rewritten into $\until_I$ with $I = [0, b \rangle$---is not correct.
We however show that by allowing \emph{automata modalities} (or, equivalently, \qtwomlo{} or \qtwomso{}~\cite{KrishnaMP18}), 
one can indeed enforce that a sequence of events specified lies in the required interval;
the proof is based on a generalisation of the 
techniques developed in~\cite{Ho19} to show that \emph{Extended Metric Interval Temporal Logic}
(\emitl{}~\cite{Wilke1994}) remains as expressive when restricted to only \emph{unilateral} intervals, i.e.~in the form of~$[0, b \rangle$ or $\langle a, \infty)$.
Building upon this insight, we `correct' the folklore belief by showing that $\counting_I^{k}$ with $I = \langle a, b \rangle$ can actually be expressed in $\counting_I^{k}$ with $I = [0, b \rangle$ (without using $\pcounting_I^k$) in a more involved way (in the pointwise semantics as well, under some extra conditions).

\paragraph{Related work.}
Hirshfeld and Rabinovich~\cite{HirRab99, HirshfeldR99, Hirshfeld2004, Hirshfeld06, hirshfeld2008decidable, Rabinovich2010, HirshfeldR12} pioneered the research on decidable timed logics that extends \mitl{} with counting and Pnueli modalities, which culminates in the strong metric predicate logic \qtwomlo{}.  
Hunter~\cite{Hunter2013} later proved that if \mtl{}~\cite{Koy90} (which is exactly like \mitl{}, but singular $I$'s are allowed) is extended in the same way, or equivalently if singular $I$'s are allowed in \qtwomlo{}, one obtains a logic that is expressively complete for \foone{} (in the continuous semantics).

In the context of temporal logics and model checking, there are also some closely related results that are not directly comparable with the present paper. Extending \ltl{} with \emph{threshold counting} is first done by Laroussinie \emph{et al.}~\cite{Laroussinie2010} where the `until' ($\until$) modality is extended with counting specifications. The timed versions of such modalities $\mathbf{UT}_I$ are studied by Krishna \emph{et al.} in~\cite{KriMad16}. 
Another type of counting specification is \emph{modulo counting}, which counts the number of events (seen so far) satisfying some monadic predicate modulo a given constant $N$. \ltl{} extended with modulo counting modalities is first considered by Baziramwabo \emph{et al.}~\cite{782629}, and Lodaya and Sreejith~\cite{LodayaS10} showed that $N$ can be encoded succinctly yet still retaining the $\mathrm{PSPACE}$ upper bound.
Bednarczyk and Charatonik~\cite{conf/fsttcs/BednarczykC17} studied the complexity of the satisfiability problem of the two variable fragment of first-order logic extended with modulo counting quantifiers interpreted over both trees and words.
Similar operations also appear in other contexts, such as \emph{temporal aggregation}~\cite{Bellomarini2021MonotonicAF} in databases and knowledge graphs.


\section{Preliminaries}\label{sec:prelim}
We give a brief account of the required background on timed logics. For more detailed reviews and comparisons of relevant results, we refer the readers to~\cite{Hirshfeld2004, BouyerLMOW17}.
Note that, in contrast with~\cite{Wilke1994, Hirshfeld06, Rabinovich2010}, we focus mainly on the \emph{future} fragments of metric temporal logics.

\paragraph{Timed languages.}
A \emph{timed word} over a finite alphabet $\Sigma$ is an $\omega$-sequence
of \emph{events} $(\sigma_i,\tau_i)_{i \geq 1}$ over
$\Sigma \times \R_{\geq 0}$ with $(\tau_i)_{i\geq 1}$ a non-decreasing
sequence of non-negative real numbers (`\emph{timestamps}') such that for each $r \in \R_{\geq 0}$,
there is some $j \geq 1$ with $\tau_j \geq r$ (i.e.~we require all timed words
to be `\emph{non-Zeno}').
We denote by $T\Sigma^\omega$ the set of all the timed words over $\Sigma$. A \emph{timed language} is a
subset of $T\Sigma^\omega$.

\paragraph{Metric predicate logics.}
\emph{Monadic Second-Order Logic of Order and Metric} (\msoone{})~\cite{Alur1993, Wilke1994}
formulae over a finite set of atomic propositions (monadic predicates) $\AP$
are generated by 
\begin{displaymath}
\vartheta  ::=  \top \,\mid\, X(x) \,\mid\, x < x' \,\mid\, d(x, x') \in I \,\mid\, \vartheta_1 \wedge \vartheta_2 \,\mid\, \neg \vartheta \,\mid\,
\exists x \, \vartheta
\,\mid\,
\exists X \, \vartheta
\end{displaymath}
where $X \in \AP$, $x, x'$ are first-order variables, $d$ is the distance predicate, 
$I \subseteq \R_{\geq 0}$ is an interval with endpoints in $\N \cup\{\infty\}$, 
and $\exists x$, $\exists X$ are first- and second-order quantifiers, respectively.
We write, e.g., $(a, b \rangle$, 
to refer to $(a, b)$ or $(a, b]$.
We say that $x$ (respectively $X$) is a \emph{free} first-order (respectively second-order) variable in $\vartheta$ if it does not appear in the scope of $\exists x$ (respectively $\exists X$) in $\vartheta$.
We usually write $\vartheta(x_1, \dots, x_{m}, X_1, \dots, X_n)$ for $\vartheta$, if $x_1$, \dots, $x_m$
and $X_1$, \dots, $X_n$ are free in $\vartheta$.
We say that an \msoone{} formula $\vartheta(x)$ with only a free first-order variable $x$ is a \emph{future formula} if 
all the quantifiers appearing in $\vartheta(x)$ are relativised to $(x, \infty)$, i.e.~if $\exists x' \, \theta$ (respectively $\forall x' \, \theta$)
is a subformula of $\vartheta(x)$, then $\theta$ is of the form $x < x' \wedge \theta'$ (respectively $x < x' \implies \theta'$).
The fragment of \msoone{} without second-order quantifiers is the  \emph{Monadic First-Order Logic of Order and Metric} (\foone{}).
The fragment of \foone{} without the distance predicate is the
\emph{Monadic First-Order Logic of Order} (\fo{}).
\qtwomlo{}~\cite{HirRab99} is a fragment of 
\foone{} obtained
from \fo{} by allowing only non-singular $I$'s (for the sake of decidability~\cite{Alur1993, Ouaknine2006}) and a restricted use of distance predicates. More precisely, \qtwomlo{} is the smallest syntactic fragment of \foone{} satisfying the following conditions:
\begin{itemize}
\item All \fo{} formulae $\vartheta(x)$ with only a free first-order variable $x$ are \qtwomlo{} formulae.
\item If $\vartheta(x_0, x)$ is an \fo{} formula (possibly with \qtwomlo{} formulae used as monadic predicates) where $x_0$, $x$ are the only free first-order variables, then 
\begin{itemize}
\item $\exists x \, \big(x_0 < x 
\wedge d(x_0, x) \in I \wedge \vartheta(x_0, x)\big)$ and 
\item $\exists x \, \big(x < x_0 
\wedge d(x_0, x) \in I \wedge \vartheta(x_0, x)\big)$,
\end{itemize}
where $I$ is non-singular,
are also \qtwomlo{} formulae (with free first-order variable $x_0$).
\end{itemize}
The future fragment $\qtwomlo{}^\textsf{fut}$ is obtained by allowing only $\vartheta(x)$ and
$\exists x \, \big(x_0 < x 
\wedge d(x_0, x) \in I \wedge \vartheta(x_0, x)\big)$ above and 
also requiring them to be future formulae.
In the same way we can define the corresponding fragments \mso{}, \qtwomso{}, and $\qtwomso{}^\textsf{fut}$~\cite{KrishnaMP18} of \msoone{}.

\paragraph{Metric temporal logics.}

A \emph{non-deterministic finite automaton} (\nfa{}) over $\Sigma$
is a tuple $\A = \langle \Sigma, S, s_0,\transitions, F \rangle$
where $S$ is a finite set of locations, $s_0 \in S$ is the initial location,
$\transitions \subseteq S \times \Sigma \times S$ is the transition relation,
and $F$ is the set of final locations.
We say that $\A$ is \emph{deterministic} (a \dfa{}) iff for each $s \in S$
and $\sigma \in \Sigma$, $| \{ (s, \sigma, s') \mid (s, \sigma, s') \in \transitions \} | \leq 1$. 
A \emph{run} of $\A$ on $\sigma_1 \dots \sigma_n \in \Sigma^+$
 is a
sequence of locations $s_0 s_1 \dots s_n$ where 
there is a transition $(s_i,\sigma_{i+1},s_{i+1}) \in \transitions$
for each $i$, $0 \leq i < n$.  A run of $\A$ is \emph{accepting} iff
it ends in a final location. A finite word is \emph{accepted} by $\A$
iff $\A$ has an accepting run on it.

(Future) \emph{Extended Metric Interval Temporal Logic} ($\emitl{}^\textsf{fut}$)~\cite{Wilke1994} formulae over
a finite set of atomic propositions $\AP$
are generated by  
\begin{displaymath}
  \phi ::= \top \,\mid\, P \,\mid\, \phi_1 \land \phi_2 \,\mid\, \neg\phi \,\mid\, \A_I(\phi_1, \dots, \phi_n)
\end{displaymath}
where $P \in \AP$, $\A$ is an \nfa{} over the $n$-ary alphabet $\{ 1, \dots, n \}$,
and $I \subseteq \R_{\geq 0}$ is a non-singular interval with endpoints in $\N \cup\{\infty\}$.\footnote{For notational simplicity,
we also use $\phi_1$,~\dots, $\phi_n$ directly as transition labels (instead of $1$, \dots, $n$) in the figures.}
We sometimes omit the subscript $I$ when $I = [0, \infty)$ and write pseudo-arithmetic expressions for lower
or upper bounds, e.g.,~`$< 3$' for $[0, 3)$.
We also omit the arguments $\phi_1$,~\dots, $\phi_n$
and simply write $\A_I$, if clear from the context.
(Future) \emph{Metric Interval Temporal Logic} ($\mitl{}^\textsf{fut}$)~\cite{AluFed96} is the fragment of $\emitl{}^\textsf{fut}$
with only the `until' modalities
defined by the \nfa{} $\A^{\until}$ in~\figurename~\ref{fig:until.nfa}
(usually written in infix notation as $\phi_1 \until_I \phi_2$).
We also use the usual shortcuts like
$\bot \equiv \neg \top$,
$\nextx_I \varphi \equiv \bot \until_I \varphi$,
$\eventually_I\phi \equiv \top\until_I\phi$, 
$\weventually_I\phi \equiv \phi \vee \eventually_I \phi$, 
$\globally_I\phi \equiv \neg\eventually_I\neg\phi$,
and $\phi_1\release_I \phi_2 \equiv \neg\big((\neg\phi_1)\until_I(\neg \phi_2)\big)$.
(Future) \emph{Linear Temporal Logic} ($\ltl{}^\textsf{fut}$)~\cite{Pnueli1977} is the
fragment of $\mitl{}^\textsf{fut}$ where all modalities are labelled by $[0,\infty)$.
$\tlc{}^\textsf{fut}$~\cite{Hirshfeld06} is the fragment of $\emitl{}^\textsf{fut}$
obtained from $\mitl{}^\textsf{fut}$ by adding the 
\emph{counting modalities}
$\counting_I^{k}$, where $I$
is a non-singular upper-bound interval (i.e.~of the form $[0, b \rangle$ for some $b \in \N_{> 0} \cup\{\infty\}$)
and $k \geq 1$.\footnote{This definition is a mild generalisation of the modalities $\counting_I$ in~\cite{Hirshfeld06, hirshfeld2008decidable} where $I$ must be $(0, 1)$. Note that
\tlc{} is equivalent to the unilateral fragment of \tlci{} (defined later in Section~\ref{sec:counting}), as intervals
of the form $\langle a, \infty)$ can easily be eliminated in general.}
For example, $\counting_I^{3} \varphi$ (`$\varphi$ happens at least $3$ times in $I$ in the future') is 
defined by the \nfa{} $\A^{\counting, 3}$
 in~\figurename~\ref{fig:counting.nfa}.

The definitions above are for the future versions of the modalities,
but we note that we can also define the \emph{past} versions of the modalities
and correspondingly the full fragments of logics (denoted by names with no `\textsf{fut}' superscripts), e.g.,~\emitl{}~\cite{Wilke1994} and \mitl{}~\cite{AlurH92}.

\begin{figure}[t]
\centering
\begin{minipage}{0.45\linewidth}
\centering
	\scalebox{.75}{\begin{tikzpicture}[node distance = 2.5cm]
		\node[initial left, state] (0) {};
		\node[state, right of=0] (1) {};
		\node[state, accepting, right of=1] (2) {};
		\path
		(0) edge[->] (1)
		(1) edge[loopabove, ->] node[above=1mm]{$\varphi_1$} (1)
		(1) edge[->] node[above=1mm]{$\varphi_2$} (2)
		(2) edge[loopabove, ->] (2);
	\end{tikzpicture}}
\caption{The \nfa{} $\A^{\until}$ for $\varphi_1 \until_I \varphi_2$.}
\label{fig:until.nfa}
\end{minipage}
\begin{minipage}{0.45\linewidth}
\centering
	\scalebox{.75}{\begin{tikzpicture}[node distance = 2cm]
		\node[initial left ,state] (0) {};
		\node[state, right of=0] (1) {};
		\node[state, right of=1] (2) {};
		\node[state, right of=2] (3) {};
		\node[state, accepting, right of=3] (4) {};
		\path
		(0) edge[->] (1)
		(1) edge[loopabove, ->] (1)
		(1) edge[->] node[above=1mm]{$\varphi$} (2)
		(2) edge[loopabove, ->] (2)
		(2) edge[->] node[above=1mm]{$\varphi$} (3)
		(3) edge[loopabove, ->] (3)
		(3) edge[->] node[above=1mm]{$\varphi$} (4)
		(4) edge[loopabove, ->] (4);
	\end{tikzpicture}}
\caption{The \nfa{} $\A^{\counting, 3}$ for $\counting_I^{3} \varphi$.}
\label{fig:counting.nfa}
\end{minipage}
\end{figure}

\paragraph{Semantics.}
With each timed word $\rho = (\sigma_i,\tau_i)_{i \geq 1}$ over $\Sigma_{\AP}= 2^\AP$ we associate a structure $M_\rho$
whose universe $U_\rho$ is $\{ i \mid i \geq 1\}$.
The order relation $<$ and atomic propositions in $\AP$ are interpreted in the expected way, e.g.,~$P(i)$ holds in $M_\rho$ iff $P \in \sigma_i$.
The distance predicate $d(x, x') \in I$ holds iff $|\tau_x - \tau_{x'}| \in I$.
The satisfaction relation for 
\msoone{} is defined inductively as usual:
we write $M_\rho, j_1, \dots, j_{m}, J_1, \dots, J_{n} \models \vartheta(x_1, \dots, x_{m}, X_1, \dots, X_n)$
(or simply $\rho, j_1, \dots, j_{m}, J_1, \dots, J_n \models \vartheta(x_1, \dots, x_{m}, X_1, \dots, X_n)$)
if $j_1, \dots, j_{m} \in U_\rho$,
$J_1, \dots, J_{n} \subseteq U_\rho$,
and $\vartheta(j_1, \dots, j_{m}, J_1, \dots, J_n)$ holds in $M_\rho$.
We say that two \msoone{} formulae $\vartheta_1(x)$ and $\vartheta_2(x)$ are \emph{equivalent} if for all timed words $\rho = (\sigma_i,\tau_i)_{i \geq 1}$ and $j \in U_\rho$,
\[
\rho, j \models \vartheta_1(x) \iff \rho, j \models \vartheta_2(x) \;.
\]

Given a $\emitl{}^\textsf{fut}$ formula $\phi$ over $\AP$, 
a timed word $\rho=(\sigma_i,\tau_i)_{i \geq 1}$ over $\Sigma_\AP = 2^\AP$ and
a \emph{position} $i \geq 1$,
we define the satisfaction relation $\rho, i \models \varphi$ as follows:
\begin{itemize}
\item $\rho, i \models \top$;
\item $\rho, i \models p$ iff $p\in \sigma_i$;
\item $\rho, i \models \phi_1\land \phi_2$ iff $\rho,i \models\phi_1$ and $\rho,i \models\phi_2$;
\item $\rho,i \models\neg\phi$ iff $\rho,i \not\models\phi$;
\item $\rho,i \models \A_I(\phi_1, \dots, \phi_n)$ iff there exists
  $j\geq i$ such that (i) $\tau_j-\tau_i\in I$ and (ii) there is an
accepting run of $\A$ on $a_i \dots a_j$ where 
$\rho, \ell \models \phi_{a_\ell}$ ($a_\ell \in \{1, \dots, n\}$) for each $\ell$, $i \leq \ell \leq j$.
\end{itemize}
We say that $\rho$ \emph{satisfies} $\phi$ (written $\rho\models\phi$)
iff $\rho,1 \models\phi$.

The definitions above correspond to the so-called \emph{pointwise} semantics of timed logics~\cite{Alur1993, AluHen94, Wilke1994, OuaWor07}.
It is also possible to define the \emph{continuous} semantics of timed logics over timed words by taking $\R_{\geq 0}$ instead of $\{ i \mid i \geq 1\}$ as the universe and $d(x, x') = |x - x'|$; we refer the readers to~\cite{DSouza2007,Bouyer2010,OuaknineRW09} for details. 
While we focus on the former in this paper, it is clear that all of our results carry over to the continuous interpretations of timed logics
where system behaviours are modelled as (finitely variable) signals.

\paragraph{Expressiveness.}
We say that a metric logic $L'$ is \emph{expressively complete} for a metric logic $L$ iff for any formula $\vartheta(x) \in L$,
there is an equivalent formula $\varphi(x) \in L'$.\footnote{Formulae of metric temporal logics in this paper are
\msoone{} formulae with a single free first-order variable.
}
We say that $L'$ is \emph{at least as expressive as}
(or \emph{more expressive than}) $L$ (written $L \subseteq L'$)
iff for any formula $\vartheta(x) \in L$, there is an \emph{initially equivalent} formula $\varphi(x) \in L'$
(i.e.,~$\vartheta(1)$ and $\varphi(1)$ evaluate to the same truth value for any timed word).
If $L \subseteq L'$ but $L' \nsubseteq L$ then we say that $L'$ is \emph{strictly more expressive than} $L$
(or $L$ is \emph{strictly less expressive than} $L'$).
We write $L \equiv L'$ iff $L \subseteq L'$ and $L' \subseteq L$. For the purpose of this paper, the most relevant
known expressiveness results are  $\emitl{}^\textsf{fut} \equiv \qtwomso{}^\textsf{fut}$
and $\text{\emph{aperiodic}~\cite{SchutzenbergerSyntacticMonoid, McNaughtonStarFreeLanguages} } \emitl{}^\textsf{fut} \equiv \qtwomlo{}^\textsf{fut}$~\cite{KrishnaMP18}, and thus we will freely mix the use of them.

\section{Expressing counting modalities}\label{sec:counting}

\paragraph{Counting events in arbitrary intervals.}
We start by giving an alternative and more general definition (in terms of \foone{}) of what do we mean by
counting events in an interval $I$.
Note that the following definition 
of $\counting_I^{k} \varphi$ is equivalent to the definition based on automata modalities in Section~\ref{sec:prelim}
for the special case where $I$ is of the form $[0, b\rangle$.
\begin{definition}[$\tlci{}^\textsf{fut}$~\cite{Rabinovich2010}]
$\tlci{}^\textsf{fut}$ is obtained from $\mitl{}^\textsf{fut}$
by adding the (one-place) modalities
$\counting_I^{k}$ defined by the following formula (where $I$ is non-singular): 
\[
    \vartheta^{\counting, k}_{I}(x, X) = \exists x_1 \, \dots \, \exists x_k \, \big(x < x_1 < \dots < x_k \wedge d(x, x_1) \in I \wedge d(x, x_k) \in I \wedge \bigwedge_{1 \leq i \leq k} X(x_i) \big) \;.
\]
\tlci{} is obtained by adding the past counterparts of the modalities above (defined symmetrically).
\end{definition}
We first note that while $\vartheta^{\counting, k}_{I}(x, X)$ is in \foone{}, it is not in \qtwomlo{} (at least syntactically), thus it is not immediately clear how to 
express it in \tlc{} (with both the future and past modalities) even in the continuous semantics, as the translation from \qtwomlo{} to \tlc{} in~\cite{Hirshfeld06, HirshfeldR12} does not apply.
It should also be clear that the trivial attempt of simply decorating $\A^{\counting, k}$ with an arbitrary non-singular $I$ would not give a formula equivalent to 
$\vartheta^{\counting, k}_{I}$. For example, the following timed word
\[
(\emptyset, 0)(\{P\}, 0.5)(\{P\}, 1.5)(\{P\}, 2.5)(\{P\},3.5)\dots
\]
satisfies $\A^{\counting, 3}_{(2, 3)} P$, but clearly 
$\rho, 1 \centernot \models \vartheta^{\counting, 3}_{(2,3)}(x, X)$.
In~\cite{Rabinovich2010}, it is stated that \tlc{}
is as expressive as \tlci{}, but no complete proof is given. In~\cite{FR08-TR2008-10} the following equivalence, which is reminiscent of how \mitl{} and \qtwomlo{} with arbitrary non-singular intervals can be reduced to their base versions using only $I = (0, 1)$ in the continuous semantics~\cite{HenRas98, HirshfeldR99, Hirshfeld06}, is proposed:
\begin{equation}\label{eqn.wrong}
\textstyle{
\counting_{(a, a + 1)}^{k} P \iff \globally_{(0, 1)} \eventually_{(0, a)}    \counting_{(0, 1)}^{k} P
} \;.
\end{equation}
This is, however, not correct in either the pointwise or the continuous semantics---for instance, if $k = 2$
and $a = 2$, then any timed word with only one event at $\tau_1 + 1$, two $P$-events in $\tau_1 + (1, 2)$, and no $P$-event in $\tau_1 + (2, 3)$ satisfies the right-hand side of (\ref{eqn.wrong}), but not its left-hand side; if $k = 2$ and $a = 1$, then
\[
(\emptyset, 0)(\emptyset, 0.6)(\emptyset, 0.7)(\{P\}, 0.8)(\{P\}, 0.9)(\emptyset, 1.6)(\{P\},1.7)(\{P\},2.1) \dots
\]
satisfies the right-hand side of (\ref{eqn.wrong}), but not its left-hand side.

In the study of timed logics, it is common to rule out constraints involving singular 
(`\emph{punctual}') intervals as they can easily render the \emph{satisfiability problem} undecidable (or have prohibitively high complexity~\cite{OuaWor07}).
If we do however allow singular intervals, then 
the following equivalence clearly holds in the continuous semantics:
\begin{equation}\label{eqn.punctual}
\textstyle{
\counting_{(a, a + 1)}^{k} P \iff \eventually_{= a}  \counting_{(0, 1)}^{k} P
} \;.
\end{equation}
Indeed, the main difficulty in expressing (\ref{eqn.punctual})
in \tlc{} is the lack of ability to express punctuality---roughly speaking, $\globally_{(0, 1)} \eventually_{(0, 1)} \varphi$
is a weaker requirement than $\eventually_{= 1} \varphi$:
the former is also satisfied by two points that both satisfy $\varphi$, 
surround $t + 1$ (where $t$ is the current time), and separated by less than $1$.
Therefore, while $\globally_{(0, 1)} \eventually_{(0, 1)} \counting_{(0 ,1)}^k P$ implies $\eventually_{(1, 2)} \counting_{(0 ,1)}^k P$ 
or $\eventually_{=1} \counting_{(0 ,1)}^k P$,
it does not guarantee that all the $k$ `witnesses' lie within $t + (1, 2)$ in the former case.
On the other hand, 
$\eventually_{(0, 1)} \globally_{(0, 1)} \counting_{(0, 1)}^k P$
does not necessarily hold when $\eventually_{=1} \counting_{(0, 1)}^k P$ holds,
as $\eventually_{(0, 1)} \globally_{(0, 1)} \varphi$
is a stronger requirement than $\eventually_{=1} \varphi$.

Before we explain how to express 
$\counting_I^{k}$ for the general case where $I = \langle a, b\rangle$ with $a < b$ in $\qtwomlo{}^\textsf{fut}$
in the next section, 
let us first mention two simple ways that do not involve punctuality to express  
them in
non-trivial extensions of \mitl{}.
\paragraph{Counting events in $I$ by automata modalities.}
In the case of counting where each witness is `context free', instead of trying to locate a suitable point where $\counting_{(0, 1)}^k P$ holds (like in (\ref{eqn.wrong})), we can specify that there are $k$ \emph{distinct} $P$-events in $t + (a, a+1)$---this can be done with $k$ modulo-$k$ counters,
similar to an idea used in~\cite{KriMad16}.
For example, if $k = 3$ we use three automata modalities that accept every $(3n)$-th, $(3n+1)$-th, and $(3n+2)$-th $P$-event, respectively, and then specify that each of them has a run that ends in $t + (a, a+1)$.
The following theorem is then immediate.
\begin{theorem}
$\tlci{}^\textsf{\textup{fut}} \subseteq \text{aperiodic } \emitl{}^\textsf{\textup{fut}} \equiv \qtwomlo{}^\textsf{\textup{fut}}$.
\end{theorem}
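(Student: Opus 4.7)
The plan is to translate every $\counting^{k}_{\langle a,b \rangle}\varphi$ modality into an equivalent $\emitl^{\textsf{fut}}$ formula built from $k$ automata that realise $\varphi$-counters modulo $k$, exploiting the combinatorial observation that any set of at least $k$ consecutive positive integers covers every residue class modulo $k$. The aperiodicity claim then follows from the cited correspondence aperiodic $\emitl^{\textsf{fut}} \equiv \qtwomlo^{\textsf{fut}}$ together with the fact that $\vartheta^{\counting,k}_{\langle a,b\rangle}(x, X)$ lies in $\foone$.

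For each $r \in \{0, 1, \ldots, k-1\}$ I would construct an NFA $\A^{(k, r)}$ over the binary alphabet $\{\varphi, \neg\varphi\}$ that maintains a modulo-$k$ counter of $\varphi$-occurrences and accepts exactly those non-empty prefixes whose last letter is $\varphi$ and whose total number of $\varphi$'s is $\equiv r \pmod{k}$. A standard $O(k)$-state construction suffices: duplicate each counter state into two copies recording whether the latest transition was on $\varphi$ or on $\neg\varphi$, and take the $\varphi$-copy of the $r$-th counter state as the sole accepting state. Then $\A^{(k, r)}_{\langle a,b\rangle}(\varphi, \neg\varphi)$ holds at position $t$ exactly when some $\varphi$-event in $t + \langle a,b\rangle$ is the $(nk + r)$-th $\varphi$-event (counting from $t$) for some $n \geq 0$.

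The key claim is
\[
\counting^{k}_{\langle a,b\rangle}\varphi \iff \bigwedge_{r=0}^{k-1} \A^{(k, r)}_{\langle a,b\rangle}(\varphi, \neg\varphi) \;.
\]
For ($\Rightarrow$): if at least $k$ $\varphi$-events occur in $t + \langle a,b\rangle$, their indices from $t$ form a stretch of $m \geq k$ consecutive positive integers and therefore cover every residue modulo $k$, giving a valid accepting run for each $\A^{(k, r)}$. For ($\Leftarrow$): the $k$ accepting runs witness $k$ $\varphi$-events in $t + \langle a,b\rangle$ whose counts from $t$ have pairwise distinct residues modulo $k$, hence pairwise distinct counts, hence pairwise distinct positions.

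For aperiodicity, observe that the $\emitl^{\textsf{fut}}$ formula just produced defines the same timed language as $\vartheta^{\counting, k}_{\langle a,b\rangle}(x, X) \in \foone$; the cited correspondence aperiodic $\emitl^{\textsf{fut}} \equiv \qtwomlo^{\textsf{fut}}$ then places it in aperiodic $\emitl^{\textsf{fut}}$. The main subtlety to flag is that the individual $\A^{(k, r)}$ are \emph{not} aperiodic---their syntactic monoids each contain a cyclic group of order $k$---so the aperiodicity cannot be read off componentwise and must be argued at the level of the whole conjunction via its first-order semantics; this is precisely why the ``each witness is context-free'' caveat matters for this construction.
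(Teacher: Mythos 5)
Your construction is exactly the paper's: the theorem is proved there by the same modulo-$k$ counter trick (for $k=3$, three automata modalities accepting every $(3n)$-th, $(3n{+}1)$-th and $(3n{+}2)$-th $\varphi$-event, each required to end its run in $t+\langle a,b\rangle$), and your observation that the indices of the $\varphi$-events falling inside $t+\langle a,b\rangle$ form a block of consecutive integers, hence cover all residues modulo $k$, is the correct (and, in the paper, implicit) justification of the equivalence. So the containment $\tlci{}^\textsf{fut} \subseteq \emitl{}^\textsf{fut}$ is established the same way and is fine, minor bookkeeping about whether the current position's letter is counted aside.

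The step that does not hold up is your aperiodicity argument. From ``the conjunction defines the same language as $\vartheta^{\counting,k}_{\langle a,b\rangle}(x,X)$, which lies in $\foone{}$'' you cannot conclude membership in $\qtwomlo{}^\textsf{fut}$ and hence in aperiodic $\emitl{}^\textsf{fut}$: $\qtwomlo{}$ is a \emph{proper} syntactic fragment of $\foone{}$, and Section~\ref{sec:counting} opens by stressing precisely that $\vartheta^{\counting,k}_I$ is in $\foone{}$ but not (syntactically) in $\qtwomlo{}$ --- if $\foone{}$ membership sufficed, the problem the paper addresses would be trivial. You are right that the $\A^{(k,r)}$ are not counter-free, so what the construction literally yields is a $\qtwomso{}^\textsf{fut}$ formula; obtaining the ``aperiodic'' qualifier requires exhibiting an equivalent formula whose automata are FO-definable, which is a separate piece of work. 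The paper itself does not spell this out (it declares the theorem ``immediate''), but the defensible route is to observe that $\counting^k_I\varphi$ is an instance of the two-witness property $\vartheta^\psi_I$ treated in Sections~\ref{sec:pnueli} and~\ref{sec:substring} (take $\psi(x',x'')$ to say that $x'$ and $x''$ are $\varphi$-positions with exactly $k$ $\varphi$-positions in $[x',x'']$), for which that construction produces a genuinely aperiodic / $\qtwomlo{}^\textsf{fut}$ formula. As written, your proof establishes $\tlci{}^\textsf{fut} \subseteq \emitl{}^\textsf{fut}$ but not the aperiodicity claim.
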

\noindent This idea, however, does not easily generalise to \tlpi{}, which we discuss in the next section.

\paragraph{Counting events in $I$ by rational constants.}

Recall from~\cite{HunterOW13} that  
$\counting_{(0, 1)}^2 P$ can be expressed as the disjunction of 
     $\eventually_{(0, \frac{1}{2})} (P \wedge \eventually_{(0, \frac{1}{2})} P)$,
    $\eventually_{(\frac{1}{2}, 1)} (P \wedge \once_{(0, \frac{1}{2})} P)$, and
    $\eventually_{(0, \frac{1}{2})} P \wedge \eventually_{(\frac{1}{2}, 1)} P$ (where $\once$ is the past version of $\eventually$).
This can easily be generalised (like in~\cite{HunterOW13}, but with trivial modifications to avoid using punctualities) to arbitrary non-singular $I$ and larger values of $k$, e.g., for $\counting_{(1, 2)}^3 P$, we partition $(1, 2)$ into $6$ subintervals and consider the cases where \begin{enumerate*}[label=\arabic*)] \item all three witnesses lie within one of the three subintervals covering $(1, 1.5)$; \item all three witnesses lie within one of the three subintervals covering $(1.5, 2)$; and \item not all witnesses lie within a single subinterval. \end{enumerate*}

\begin{theorem}\label{thm:rational}
\mitl{} (with both the future and past modalities) is expressively complete for \tlci{}, if rational constants are allowed.   
\end{theorem}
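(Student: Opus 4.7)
The plan is to prove this by structural induction on the \tlci{} formula; the only interesting case is $\counting_I^k \varphi$ (its past counterpart $\pcounting_I^k \varphi$ is completely symmetric). By the outer structural hypothesis, I may assume $\varphi$ already has an equivalent \mitl{} formula with rational constants, so $\varphi$ can be used freely as a sub-formula. Inside this outer induction, I would handle the general bounded non-singular case $I = \langle a, b\rangle$ by a sub-induction on $k$, the degeneracies $a = 0$ or $b = \infty$ reducing to this case. The base $k = 1$ is just $\counting_I^1 \varphi \equiv \eventually_I \varphi$.

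For the step with $k \geq 2$, partition $I$ into $2k$ consecutive subintervals $J_1, \dots, J_{2k}$ of equal length $\ell = (b-a)/(2k)$ (this is where rational constants become essential) and express $\counting_I^k \varphi$ as a disjunction whose disjuncts fall into three families:
\textbf{(A)} for each $i \in \{1, \dots, k\}$, a ``forward chain''
$\eventually_{J_i}\bigl(\varphi \wedge \eventually_{(0,\ell)}\bigl(\varphi \wedge \dots \wedge \eventually_{(0,\ell)}\varphi\bigr)\bigr)$
with $k-1$ nested copies of $\eventually_{(0,\ell)}$, designed to witness the situation where all $k$ events sit inside a single first-half subinterval $J_i$;
\textbf{(B)} symmetrically, for each $i \in \{k+1, \dots, 2k\}$, a ``backward chain'' obtained from (A) by replacing every $\eventually_{(0,\ell)}$ with the past modality $\once_{(0,\ell)}$;
\textbf{(C)} for every distribution profile $(k_1, \dots, k_{2k})$ with $\sum_i k_i = k$ and at least two non-zero entries, the conjunction $\bigwedge_{k_i > 0} \counting_{J_i}^{k_i} \varphi$, each $\counting_{J_i}^{k_i} \varphi$ being supplied by the sub-induction hypothesis since $k_i < k$.

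Soundness of (A) follows from the bounds $s_1 > a + (i-1)\ell \geq a$ and $s_k < s_1 + (k-1)\ell < (a+b)/2 + (b-a)/2 - \ell = b - \ell$, so the $k$ strictly increasing witnesses all sit inside $I$; (B) is symmetric; and in (C) witnesses in distinct subintervals are automatically distinct, yielding $\sum k_i = k$ events in $I$. Completeness is an exhaustive case split on any $k$ witnesses lying in $I$: either they all occupy a single subinterval $J_i$ (giving (A) or (B) depending on whether $i \leq k$), or they span at least two subintervals (giving (C) with the corresponding profile). The main obstacle is the calibration of the partition: it has to be fine enough that the forward (respectively backward) chain in (A) (respectively (B)) cannot escape $I$, yet coarse enough that every sub-call $\counting_{J_i}^{k_i} \varphi$ in (C) has $k_i < k$ so the sub-induction terminates --- the choice of $2k$ subintervals is the sweet spot that makes both requirements go through simultaneously. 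The remaining work is routine: checking open/closed endpoints, handling the $a = 0$ and $b = \infty$ degeneracies, and mirroring the whole construction for $\pcounting_I^k$.
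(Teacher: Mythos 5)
Your construction is essentially the paper's own argument: the paper likewise partitions $I$ into $2k$ equal rational-length subintervals and takes a disjunction over (a) forward $\eventually$-chains anchored in a first-half subinterval, (b) backward $\once$-chains anchored in a second-half subinterval, and (c) distribution profiles across several subintervals handled by recursion on $k$, citing~\cite{HunterOW13} for the $k=2$ base case. The only caveat---arguably subsumed by your ``routine'' endpoint check---is that in the pointwise semantics distinct witness positions may share a timestamp, so the chain steps should use $\eventually_{[0,\ell)}$ rather than $\eventually_{(0,\ell)}$.
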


This also applies straightforwardly to \tlpi{}.
On the other hand, \mitl{} with \emph{only one of} these extensions---i.e.~either past modalities~\cite{PandyaS11} or rational constants~\cite{Bouyer2010}---
is insufficient for expressing \tlpi{}.

\section{Expressing $\pnueli^2_I$ in $\qtwomlo{}^\textsf{\textmd{fut}}$}\label{sec:pnueli}

A more general form of counting, where one can specify a sequence of distinct events, is enabled by the \emph{Pnueli modalities} $\pnueli_I^k$ defined below. 
Once again,~\cite{Rabinovich2010} states that they are expressible in \tlc{} without proof.
\begin{definition}[$\tlpi{}^\textsf{fut}$~\cite{Rabinovich2010}]
$\tlpi{}^\textsf{fut}$ is obtained from $\mitl{}^\textsf{fut}$
by adding the ($k$-place) modalities
$\pnueli_I^k$ defined by the following formula (where $I$ is non-singular): 
\[
    \vartheta^{\pnueli, k}_{I}(x, X_1, \dots, X_k) = \exists x_1 \, \dots \, \exists x_k \, \big(x < x_1 < \dots < x_k \wedge d(x, x_1) \in I \wedge d(x, x_k) \in I \wedge \bigwedge_{1 \leq i \leq k} X_i(x_i) \big) \;.
\]
\tlpi{} is obtained by adding the past counterparts of the modalities above (defined symmetrically).
\end{definition}

The modulo-$k$ trick that we used earlier to express $\counting_I^k$ no longer works in the case of Pnueli modalities, as obviously we must also
ensure that $X_1, \dots, X_k$ are satisfied \emph{in this order} by a sequence of events in $I$.
We now describe a general construction of $\qtwomlo{}^\textsf{fut}$ formulae (or, equivalently, aperiodic $\emitl{}^\textsf{fut}$ formulae where all automata modalities are definable by $\ltl{}^\textsf{fut}$ or future \fo{} formulae~\cite{KrishnaMP18}) that specify sequences of events in arbitrary non-singular intervals. 
For simplicity, we will use $\pnueli^2_{(a, a+1)}(P, Q)$ with $a \geq 1$ as an example to explain the ideas involved 
before we extend the construction to the general case where the sequence of events is specified by a first- or second-order formula  in the next section.


Let us call a pair of positive integers $\langle h, \ell \rangle$ where $h \leq \ell$ a \emph{segment}.
Given a timed word 
$\rho = (\sigma_i,\tau_i)_{i \geq 1}$ over $\Sigma_{\AP}$ where $\AP = \{ P, Q \}$, we say that a segment $\langle h, \ell \rangle$ is a \emph{witness} for $\pnueli^2_{(a, a+1)}(P, Q)$ at $i$ if $h < \ell$, $P \in \sigma_h$, $Q \in \sigma_\ell$, 
 $\langle h, \ell \rangle$ is \emph{minimal} in the sense that there is no $h', \ell'$ such that $h \leq h' \leq \ell' \leq \ell$, either $h < h'$ or $\ell' < \ell$, and $\langle h', \ell \rangle$  also satisfies the conditions above, and both $\tau_h, \tau_\ell \in \tau_i + (a, a+1)$.
In other words, $\rho, h \models \exists x' \, \varphi_1(x, x')$ where
\[
\varphi_1(x, x') = x < x' \wedge P(x) \wedge Q(x') \wedge \neg \exists y \, \Big(x < y < x' \wedge \big(P(y) \vee Q(y)\big)\Big) \;.
\]
The idea is that
$\exists x' \, \varphi_1(x, x')$
holds at the starting points $h$ of all the \emph{potential witnesses} (witnesses but without the timing requirement in relation to $\tau_i$) for $\pnueli^2_{(a, a+1)}(P, Q)$.
For each $i \geq 1$, we either have $\rho, i \models \exists x' \, \varphi_1(x, x')$ or $\rho, i \centernot \models \exists x' \, \varphi_1(x, x')$, and 
this gives rise to a (finite or infinite) sequence of 
potential witnesses for $\pnueli^2_{(a, a+1)}(P, Q)$:
\[
\langle h_1, \ell_1 \rangle \langle h_2, \ell_2 \rangle\dots
\]
where $h_1 < h_2 < \dots$.
From the definition of $\varphi_1$, it is clear that $\ell_j \leq h_{j+1}$ for all $j$ (i.e.~the potential witnesses for $\pnueli^2_{(a, a+1)}(P, Q)$ do not overlap except possibly on the endpoints).

Now, to specify that $\rho, i \models \pnueli^2_{(a, a+1)}(P, Q)$, we want to express the condition that some potential witness $\langle h_j, \ell_j \rangle$ for $\pnueli^2_{(a, a+1)}(P, Q)$
actually satisfies the timing requirement $\tau_{h_j}, \tau_{\ell_j} \in \tau_i + (a, a+1)$.
We start from this initial attempt to express 
$\pnueli^2_{(a, a+1)}(P, Q)$:
\[
\varphi_{\textit{wit}} = \eventually_{(a, a+1)} \varphi_1 \wedge
\A^1_{(a, a+1)} 
\]
where $\varphi_1$ is the \ltl{} formula equivalent to $\exists x' \, \varphi_1(x, x')$, $\A^1$ is the equivalent \nfa{} for
$\varphi_1'(x, x') = \exists y \,  \big( x < y < x' \wedge \varphi_1(y, x') \big) $.\footnote{Technically, we can use a theorem in~\cite{Gabbay1980} to get equivalent finite-word \ltl{} formulae (over infinite-word \ltl{} formulae as monadic predicates) for
\fo{} formulae of the form $\varphi(x, x')$.
}
Intuitively, $\eventually_{(a, a+1)} \varphi_1$ says that $d(i, h_j) \in (a, a+1)$ for some $j$, and 
$\A_{(a, a+1)}$ says that
$d(i, \ell_{j'}) \in (a, a+1)$ for some $j'$.
But it is not hard to see that an undesired scenario (illustrated in~\figurename~\ref{fig:scenario.one}), where no potential witness $\langle h, \ell \rangle$ for $\pnueli^2_{(a, a+1)}(P, Q)$ lies completely within $\tau_i + (a, a+1)$,
also satisfies $\varphi_{\textit{wit}}$.
To capture and rule out this undesired scenario, note that in~\figurename~\ref{fig:scenario.one} it is clear that 
the time elapsed between $h_j$ and $\ell_{j+1}$ is greater or equal than $1$. Based on this observation, we can write a formula involving the two adjacent potential witnesses $\langle h_j, \ell_j \rangle$ and 
$\langle h_{j+1}, \ell_{j+1} \rangle$ for $\pnueli^2_{(a, a+1)}(P, Q)$:
\[
\varphi_2(x, y') = \exists x' \, \exists y \, \Big(
x < y \wedge x \leq x' \wedge y \leq y' \wedge
\varphi_1(x, x') \wedge \varphi_1(y, y') \wedge \neg \exists z \, \exists z' \, \big(x < z < y \wedge z \leq z' \wedge
 \varphi_1(z, z')\big) \Big) \;.
\]
To express $d(h_j, \ell_{j+1}) \geq 1$, we just check if the $\qtwomlo{}^\textsf{fut}$ formula
\[
\varphi_2^{\geq 1}(x) = \exists y' \, \big(x < y' \wedge d(x, y') \geq 1 \wedge \varphi_2(x, y')\big)
\]
holds at position $h_j$. It remains to enforce the following conditions:
\begin{itemize}
    \item $\langle h_j, \ell_j \rangle$ is the \emph{last} segment $\langle h, \ell \rangle$ with $\tau_{h} \leq \tau_i + a$.
    \item $\tau_{\ell_{j+1}} \geq \tau_i + (a+1)$; see~\figurename~\ref{fig:scenario.one.desired} for an example
    when $\langle h_{j+1}, \ell_{j+1} \rangle$ lies completely within $\tau_i + (a, a+1)$ but $\varphi_2^{\geq 1}(x)$ holds at $h_j$.  
\end{itemize}
We now use the following crucial lemma to locate the last 
$\langle h, \ell \rangle$ with $\tau_{h} \leq \tau_i + a$.
\begin{lemma}
For any $\rho = (\sigma_i,\tau_i)_{i \geq 1}$ over $\Sigma_{\AP}$ where $\AP = \{ P, Q \}$,
the $\qtwomlo{}^\textsf{\textup{fut}}$
formula $\varphi_2^{\geq 1}(x)$ is satisfied by at most $2a + 2$ positions $j > i$ with $d(i, j) \in [0, a]$ for any $i \geq 0$.
\end{lemma}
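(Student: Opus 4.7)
The plan is first to pin down where $\varphi_2^{\geq 1}$ can hold. Because $\varphi_2(x, y')$ demands $\exists x' \, \varphi_1(x, x')$ at $x$, the formula $\varphi_2^{\geq 1}(j)$ can only hold when $j$ is a potential-witness start, i.e.~$j = h_m$ for some $m$. The no-intermediate-witness conjunct in $\varphi_2$ additionally pins the second witness to $\langle h_{m+1}, \ell_{m+1}\rangle$, so $\varphi_2^{\geq 1}(h_m)$ is equivalent to the assertion that $\langle h_{m+1}, \ell_{m+1}\rangle$ exists and $\tau_{\ell_{m+1}} - \tau_{h_m} \geq 1$.

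I would then enumerate the satisfying positions as $h_{m_1} < h_{m_2} < \cdots < h_{m_r}$ with every $\tau_{h_{m_k}} \in [\tau_i, \tau_i + a]$. The heart of the argument is the inequality
\[
  \tau_{h_{m_{k+2}}} \;\geq\; \tau_{h_{m_k}} + 1 \quad \text{for every } k \text{ with } k + 2 \leq r.
\]
To establish it, recall that the minimality conjunct inside $\varphi_1$ makes consecutive potential witnesses non-overlapping except possibly at endpoints, so $\ell_j \leq h_{j+1}$ for every $j$. Since the $m_k$'s are strictly increasing, $m_{k+2} \geq m_k + 2$, hence $h_{m_{k+2}} \geq h_{m_k + 2} \geq \ell_{m_k + 1}$ at the level of positions, which translates to $\tau_{h_{m_{k+2}}} \geq \tau_{\ell_{m_k + 1}}$ on timestamps. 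Combining this with $\varphi_2^{\geq 1}(h_{m_k})$, which yields $\tau_{\ell_{m_k + 1}} \geq \tau_{h_{m_k}} + 1$, gives the desired bound.

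Finally, splitting the indices into the odd-position subsequence $m_1, m_3, m_5, \ldots$ and the even-position subsequence $m_2, m_4, m_6, \ldots$ shows that along each, consecutive timestamps differ by at least $1$. Since $\tau_{h_{m_r}} - \tau_{h_{m_1}} \leq a$, each subsequence contains at most $a + 1$ elements, giving $r \leq 2(a + 1) = 2a + 2$. The bound is in fact tight, as can be seen by pairing $P$-$Q$ events at each integer time in $[\tau_i, \tau_i + a]$. The only real subtlety in the proof is the structural claim that the $\langle h_j, \ell_j\rangle$'s tile the positions monotonically, and this is immediate from the minimality baked into $\varphi_1$; everything else is a straightforward pigeonhole argument.
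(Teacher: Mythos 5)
Your proof is correct and follows essentially the same route as the paper's: the key step in both is the observation that $\varphi_2^{\geq 1}(h_j)$ forces $\tau_{h_{j+2}} \geq \tau_{h_j} + 1$ (via $h_{j+2} \geq \ell_{j+1}$ and $\tau_{\ell_{j+1}} \geq \tau_{h_j}+1$), after which a pigeonhole argument over a window of length $a$ yields the $2a+2$ bound. Your write-up merely spells out the non-overlapping structure of the potential witnesses and the odd/even subsequence counting that the paper leaves implicit.
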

\begin{proof}
Let $\langle h_1, \ell_1 \rangle \langle h_2, \ell_2 \rangle\dots$ be the sequence of 
potential witnesses for $\pnueli^2_{(a, a+1)}(P, Q)$ as described above.
If $\rho, h_j \models \varphi_2^{\geq 1}(x)$,
then either there is no $\langle h_{j+2}, \ell_{j+2} \rangle$ or $\tau_{h_{j+2}} \geq \tau_{h_{j}} + 1$.
It follows that if there are $2a + 3$ positions satisfying $\varphi_2^{\geq 1}(x)$, then the first and the last of them must be more than $a$ apart.
\end{proof}
\noindent It follows that the undesired scenario \#1 is captured by
\[
\varphi_{\text{out}} =
\bigvee_{1 \leq k \leq 2a + 2} 
\big(
\textstyle{\counting_{\leq a}^k} (\A^2_{\geq 1}) \wedge \neg  \textstyle{\counting_{\leq a}^{k + 1}} (\A^2_{\geq 1}) 
\wedge
\B^k_{\geq a + 1}
\big)
\]
where 
$\A^2$ is the equivalent \nfa{} for
$\varphi_2(x, y')$ (i.e.~$\A^2_{\geq 1} \equiv \varphi_2^{\geq 1}(x)$)
and $\B^k$ is the equivalent \nfa{} for
\begin{IEEEeqnarray*}{rCll}
\varphi_2^k(x, x') & = 
 & \exists x_1 \, \dots \exists x_k \, 
\Big( & x < x_1 < \dots < x_k < x' \wedge \varphi_2^{\geq 1}(x_1) \wedge \dots \wedge \varphi_2^{\geq 1}(x_k) \wedge \varphi_2(x_k, x') \\
& & & {} \wedge \neg \exists y \, \big(x \leq y \leq x_k \wedge \bigwedge_{1 \leq j \leq k} (y \neq x_k) \wedge \varphi_2^{\geq 1}(y)\big) \Big) \;;
\end{IEEEeqnarray*}
it can be obtained by regarding $\varphi_2^{\geq 1}$ as an atomic proposition and replace it afterwards by $\A^2_{\geq 1}$. 
Specifically, the first two conjuncts specify that the number of positions satisfying $\varphi_2^{\geq 1}(x)$ before $\tau_i + a$ is exactly $k$, and the last conjunct ensures that the second potential witness 
in this pair is out of bounds, i.e.~$\tau_{\ell_{j+1}} \geq \tau_i + (a + 1)$.
The desired formula is 
\[
\varphi^{\pnueli, 2}_{(a,a+1)}(P, Q) =
\varphi_{\textit{wit}} \wedge \neg \varphi_{\textit{out}} \;.
\]
\begin{proposition}
$\varphi^{\pnueli, 2}_{(a,a+1)}(P, Q) \equiv \pnueli^2_{(a, a+1)}(P, Q)$.
\end{proposition}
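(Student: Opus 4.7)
The plan is to prove the equivalence by two implications, with the sequence of potential witnesses $\langle h_1, \ell_1 \rangle \langle h_2, \ell_2 \rangle \dots$ (as introduced just above the statement) serving as the central combinatorial object. In both directions, everything reduces to an elementary case analysis of how each $h_j$ and $\ell_j$ is placed relative to $\tau_i + a$ and $\tau_i + (a+1)$, together with the non-overlap property $\ell_r \leq h_{r+1}$.

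For the direction $\pnueli^2_{(a,a+1)}(P, Q) \Rightarrow \varphi^{\pnueli, 2}_{(a,a+1)}(P, Q)$, I would start from a pair $h < \ell$ with $P$ at $h$, $Q$ at $\ell$, and $\tau_h, \tau_\ell \in \tau_i + (a, a+1)$, and extract from it a \emph{canonical} potential witness $\langle h_j, \ell_j \rangle$ inside $[h, \ell]$ by letting $\ell_j$ be the first $Q$-event in $[h, \ell]$ and $h_j$ the last $P$-event in $[h, \ell_j]$. Monotonicity of timestamps then gives $\tau_{h_j}, \tau_{\ell_j} \in \tau_i + (a, a+1)$, so both conjuncts of $\varphi_{\textit{wit}}$ are immediate. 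For $\neg \varphi_{\textit{out}}$, I plan to argue by contradiction: any witness of the $k$-th disjunct yields consecutive potential witnesses $\langle h_{k^*}, \ell_{k^*} \rangle$ and $\langle h_{k^*+1}, \ell_{k^*+1} \rangle$ with $\tau_{h_{k^*}} \leq \tau_i + a$ and $\tau_{\ell_{k^*+1}} \geq \tau_i + (a+1)$, and a short case distinction on how $j$ compares with $k^*$ and $k^*+1$ rules out each possibility using non-overlap.

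For the converse, I would assume $\varphi^{\pnueli, 2}_{(a,a+1)}(P, Q)$ holds and suppose, for contradiction, that no complete potential witness lies in $\tau_i + (a, a+1)$. From $\eventually_{(a,a+1)} \varphi_1$ I would pick the \emph{earliest} $k_2$ with $\tau_{h_{k_2}} \in \tau_i + (a, a+1)$, and from $\A^1_{(a,a+1)}$ the \emph{latest} $k_1$ with $\tau_{\ell_{k_1}} \in \tau_i + (a, a+1)$. The absence of a complete witness forces $k_1 \neq k_2$, and a short argument using monotonicity of timestamps together with non-overlap shows $k_2 = k_1 + 1$ and $\tau_{\ell_{k_1+1}} \geq \tau_i + (a+1)$. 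Setting $k^* := k_1$, I would verify that $\varphi_2^{\geq 1}(h_{k^*})$ holds (since $d(h_{k^*}, \ell_{k^*+1}) \geq 1$) and that $h_{k^*}$ is the latest position in $(\tau_i, \tau_i + a]$ with this property; letting $k$ be the total count of such positions, the preceding Lemma gives $1 \leq k \leq 2a+2$, and the $k$-th disjunct of $\varphi_{\textit{out}}$ is then witnessed by $x_k = h_{k^*}$ together with $x' = \ell_{k^*+1}$, contradicting $\neg \varphi_{\textit{out}}$.

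The main obstacle is the precise bookkeeping needed to align the three automata modalities $\A^1, \A^2, \B^k$ with their defining \foone{} formulae $\varphi_1'$, $\varphi_2^{\geq 1}$, and $\varphi_2^k$ under the $\emitl{}$ run convention that fixes the starting position at $i$; once this correspondence is set up cleanly, the remaining combinatorial content reduces to the two case analyses sketched above, both of which are finite and rely only on the non-overlap of potential witnesses and the elementary interval arithmetic between $a$, $a+1$, and the timestamps $\tau_{h_j}, \tau_{\ell_j}$.
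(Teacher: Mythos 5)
Your proposal is correct and takes essentially the same route as the paper's own proof: both directions rest on the dichotomy between a complete (minimal) potential witness lying inside $\tau_i + (a, a+1)$ and the straddling scenario in which consecutive potential witnesses $\langle h_j, \ell_j \rangle$, $\langle h_{j+1}, \ell_{j+1} \rangle$ have $\tau_{h_j} \leq \tau_i + a$ and $\tau_{\ell_{j+1}} \geq \tau_i + a + 1$, with $\varphi_{\textit{out}}$ used exactly as in the paper to detect and exclude the latter. Your write-up merely makes explicit some bookkeeping the paper leaves implicit (the canonical extraction of a minimal witness, the identification $k_2 = k_1 + 1$, and the maximality of $h_{k^*}$ among positions satisfying $\varphi_2^{\geq 1}$ within distance $a$), all of which checks out.
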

\begin{proof}
If $\varphi^{\pnueli, 2}_{(a,a+1)}(P, Q)$ holds at $i$ then either there is a potential witness $\langle h, \ell \rangle$ for $\pnueli^2_{(a, a+1)}(P, Q)$ that lies completely within $\tau_i + (a, a+1)$ (in which case $\pnueli^2_{(a, a+1)}(P, Q)$ holds),
or we are in the scenario in~\figurename~\ref{fig:scenario.one}---but this is impossible, as one of the disjuncts of $\varphi_{\text{out}}$ must hold at $i$, as argued above.
If $\pnueli^2_{(a, a+1)}(P, Q)$ holds at $i$, then we have a witness $\langle h, \ell \rangle$ for $\pnueli^2_{(a, a+1)}(P, Q)$ at $i$ that lies completely within $\tau_i + (a, a+1)$, and $\varphi_{\textit{wit}}$ clearly holds at $i$ too.
If $\textstyle{\counting_{\leq a}^k} (\A^2_{\geq 1}) \wedge \neg  \textstyle{\counting_{\leq a}^{k + 1}} (\A^2_{\geq 1})$ indeed holds
at $i$ for some $k$ then $\B^k_{\geq a + 1}$ must not hold at $i$: if $\langle h_j, \ell_j \rangle$ and $\langle h_{j+1}, \ell_{j+1} \rangle$ are potential witnesses for $\pnueli^2_{(a, a+1)}(P, Q)$ and $h_j$ is the $k$-th point satisfying $\varphi_2^{\geq 1}(x)$,
we must have $h_{j+1} \leq h$ and $\tau_{\ell_{j+1}} \in \tau_i + (a, a + 1)$.
\end{proof}

\begin{figure}[t]
\centering
\begin{minipage}{0.45\linewidth}
\centering
\scalebox{.70}{
\begin{tikzpicture}
\begin{scope}

\draw[-, very thick, loosely dotted] (-72pt,0pt) -- (-32pt,0pt);

\draw[-, very thick, loosely dotted] (160pt,0pt) -- (200pt,0pt);

\draw[loosely dashed] (-10pt,-30pt) -- (-10pt,10pt) node[at start, below=2mm] {$a$};

\draw[loosely dashed] (90pt,-30pt) -- (90pt,10pt) node[at start, below=2mm] {$a + 1$};

\draw[draw=black, fill=white] (1pt, -4pt) rectangle (-1pt, 4pt);

\draw[draw=black, fill=white] (53pt, -4pt) rectangle (51pt, 4pt);
\node[below, fill=white, inner sep=1mm] at (52pt, -7pt) {$\ell_{j}$};

\draw[draw=black, fill=white] (105pt, -4pt) rectangle (103pt, 4pt);

\draw[draw=black, fill=white] (151pt, -4pt) rectangle (149pt, 4pt);
\node[below, fill=white, inner sep=1mm] at (150pt, -7pt) {$\ell_{j+1}$};

\draw[draw=black, fill=white] (-21pt, -4pt) rectangle (-23pt, 4pt);
\node[below, fill=white, inner sep=1mm] at (-22pt, -7pt) {$h_{j}$};

\draw[draw=black, fill=white] (31pt, -4pt) rectangle (29pt, 4pt);

\draw[draw=black, fill=white] (80pt, -4pt) rectangle (78pt, 4pt);
\node[below, fill=white, inner sep=1mm] at (79pt, -7pt) {$h_{j+1}$};

\draw[draw=black, fill=white] (87pt, -4pt) rectangle (85pt, 4pt);

\draw[draw=black, fill=white] (120pt, -4pt) rectangle (118pt, 4pt);

\end{scope}

\begin{scope}
\draw[|<->|][dotted] (-22pt,25pt)  -- (150pt,25pt) node[midway,above] {{$\geq 1$}};
\end{scope}

\end{tikzpicture}
}
\caption{Undesired scenario \#1.}
\label{fig:scenario.one}
\end{minipage}
\begin{minipage}{0.45\linewidth}
\centering
\scalebox{.70}{
\begin{tikzpicture}
\begin{scope}

\draw[-, very thick, loosely dotted] (-92pt,0pt) -- (-52pt,0pt);

\draw[-, very thick, loosely dotted] (133pt,0pt) -- (173pt,0pt);

\draw[loosely dashed] (-10pt,-30pt) -- (-10pt,10pt) node[at start, below=2mm] {$a$};

\draw[loosely dashed] (90pt,-30pt) -- (90pt,10pt) node[at start, below=2mm] {$a + 1$};

\draw[draw=black, fill=white] (20pt, -4pt) rectangle (18pt, 4pt);
\node[below, fill=white, inner sep=1mm] at (19pt, -7pt) {$\ell_{j}$};

\draw[draw=black, fill=white] (10pt, -4pt) rectangle (8pt, 4pt);

\draw[draw=black, fill=white] (114pt, -4pt) rectangle (112pt, 4pt);

\node[below, fill=white, inner sep=1mm] at (81pt, -7pt) {$\ell_{j+1}$};

\draw[draw=black, fill=white] (-41pt, -4pt) rectangle (-43pt, 4pt);
\node[below, fill=white, inner sep=1mm] at (-42pt, -7pt) {$h_{j}$};

\draw[draw=black, fill=white] (46pt, -4pt) rectangle (44pt, 4pt);

\draw[draw=black, fill=white] (58pt, -4pt) rectangle (56pt, 4pt);
\node[below, fill=white, inner sep=1mm] at (57pt, -7pt) {$h_{j+1}$};

\draw[draw=black, fill=white] (82pt, -4pt) rectangle (80pt, 4pt);

\end{scope}

\begin{scope}
\draw[|<->|][dotted] (-42pt,25pt)  -- (81pt,25pt) node[midway,above] {{$\geq 1$}};
\end{scope}

\end{tikzpicture}
}
\caption{Desired scenario with $d(h_j, \ell_{j+1}) \geq 1$.}
\label{fig:scenario.one.desired}
\end{minipage}
\end{figure}

\section{Expressing more general properties in $\qtwomlo{}^\textsf{\textmd{fut}}$}\label{sec:substring}
We now consider the more general case where the desired 
behaviour in $I$ is 
specified as a future \fo{} formula $\psi(x', x'')$.\footnote{The proof applies also to the case where $\psi(x', x'')$ is a second-order formula.}
Formally, 
the property that we want to express is
\[
    \vartheta^\psi_{I}(x) = \exists x' \, \exists x'' \, \big(x < x' \leq x'' \wedge d(x, x') \in I \wedge d(x, x'') \in I \wedge \psi(x', x'') \big) \;.
\]
To simplify the analysis, we first modify $\psi(x', x'')$ into  $\psi_1(x', x'')$ to rule out witnesses that are not minimal: 
\[
\psi_1(x', x'') = \psi(x', x'') \wedge \neg \Big( \exists y \, \exists z \, \big(x' \leq y \leq z \leq x'' \wedge (x < y \vee z < x') \wedge \psi(y, z) \big) \Big) \;.
\]
Similarly as before, 
$\exists x'' \, \psi_1(x', x'')$
holds at the starting points of all the potential witnesses for $\vartheta^\psi_{I}$.
However, as opposed to the case of $P^2_{(a, a+1)}(P, Q)$, 
now the potential witnesses may overlap non-trivially.
In particular, if $\rho, i \models \psi_{\textit{wit}}$
where $\psi_{\textit{wit}}$
is defined in the same way as 
$\varphi_{\textit{wit}}$
in the last section, there is one more possible undesired scenario (illustrated in~\figurename~\ref{fig:scenario.two}; note in particular that $\psi_1(h_{j+2}, \ell_{j})$ does not hold).
Thanks to the finite-state nature of $\psi_1(x', x'')$,    
the scenario in~\figurename~\ref{fig:scenario.two} can also be ruled out in the same way: in this particular case, either $d(h_j, \ell_{j+1}) \geq 1$ or $d(h_{j+1}, \ell_{j+2}) \geq 1$ must hold. This is made possible by the following lemma that
gives an upper bound on the number of positions satisfying $\psi_2^{\geq 1}(x)$ (defined from $\psi_1(x', x'')$ in the same way as $\varphi_2^{\geq 1}(x)$) before $\tau_i + a$.\footnote{Similar observations based on Shelah's \emph{composition method}~\cite{Shelah1975} have also been used in~\cite{Hirshfeld06, HirshfeldR12}.}

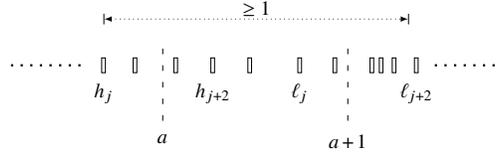
\begin{figure}[t]
\centering
\scalebox{.70}{
\begin{tikzpicture}
\begin{scope}

\draw[-, very thick, loosely dotted] (-92pt,0pt) -- (-52pt,0pt);

\draw[-, very thick, loosely dotted] (137pt,0pt) -- (177pt,0pt);

\draw[loosely dashed] (-10pt,-30pt) -- (-10pt,10pt) node[at start, below=2mm] {$a$};

\draw[loosely dashed] (90pt,-30pt) -- (90pt,10pt) node[at start, below=2mm] {$a + 1$};

\draw[draw=black, fill=white] (-2pt, -4pt) rectangle (-4pt, 4pt);

\draw[draw=black, fill=white] (18pt, -4pt) rectangle (16pt, 4pt);
\node[below, fill=white, inner sep=1mm] at (17pt, -7pt) {$h_{j+2}$};

\draw[draw=black, fill=white] (104pt, -4pt) rectangle (102pt, 4pt);

\draw[draw=black, fill=white] (109pt, -4pt) rectangle (107pt, 4pt);

\draw[draw=black, fill=white] (116pt, -4pt) rectangle (114pt, 4pt);

\draw[draw=black, fill=white] (128pt, -4pt) rectangle (126pt, 4pt);
\node[below, fill=white, inner sep=1mm] at (127pt, -7pt) {$\ell_{j+2}$};

\draw[draw=black, fill=white] (-41pt, -4pt) rectangle (-43pt, 4pt);
\node[below, fill=white, inner sep=1mm] at (-42pt, -7pt) {$h_{j}$};

\draw[draw=black, fill=white] (38pt, -4pt) rectangle (36pt, 4pt);

\draw[draw=black, fill=white] (65pt, -4pt) rectangle (63pt, 4pt);
\node[below, fill=white, inner sep=1mm] at (64pt, -7pt) {$\ell_{j}$};

\draw[draw=black, fill=white] (84pt, -4pt) rectangle (82pt, 4pt);

\draw[draw=black, fill=white] (-24pt, -4pt) rectangle (-26pt, 4pt);

\end{scope}

\begin{scope}
\draw[|<->|][dotted] (-42pt,25pt)  -- (123pt,25pt) node[midway,above] {{$\geq 1$}};
\end{scope}

\end{tikzpicture}
}
\caption{Undesired scenario \#2 where $\psi_1(h_{j+2}, \ell_{j})$ does not hold.}
\label{fig:scenario.two}
\end{figure}

\begin{lemma}
For any $\rho = (\sigma_i,\tau_i)_{i \geq 1}$ over $\Sigma_{\AP}$,
the \qtwomlo{} formula $\psi_2^{\geq 1}(x)$ over $\AP$ is satisfied by at most $(m + 1) \cdot (a + 1)$ positions $j > i$ (where $m$ is the number of locations in the minimal equivalent \dfa{} for $\psi_1(x, x')$) with $d(i, j) \in [0, a]$ for any $i \geq 0$.
\end{lemma}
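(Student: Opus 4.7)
The plan is to partition the window $[\tau_i, \tau_i + a]$ into the $a + 1$ unit sub-intervals $U_c = [\tau_i + c, \tau_i + c + 1)$ for $c = 0, \ldots, a$, and to show that at most $m + 1$ positions satisfying $\psi_2^{\geq 1}(x)$ can lie in any single $U_c$; summing over the $a + 1$ sub-intervals then yields the claimed bound $(m+1)(a+1)$. The per-interval bound will rest on the finite-state character of $\psi_1$ (captured by the minimal DFA $\mathcal{A}$) together with the minimality clause built into $\psi_1$.

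Fix a $U_c$ and suppose, for contradiction, that there are $m + 2$ positions $h_{j_1} < \cdots < h_{j_{m+2}}$ in $U_c$ each satisfying $\psi_2^{\geq 1}(x)$. By the definition of $\psi_2$, each $h_{j_t}$ is a potential-witness start that comes equipped with its \emph{next} potential witness $\langle h_{j_t + 1}, \ell_{j_t + 1}\rangle$, and we have $\tau_{\ell_{j_t + 1}} \geq \tau_{h_{j_t}} + 1 \geq \tau_i + c + 1$. Since potential-witness starts are strictly increasing in index, $h_{j_t + 1} \leq h_{j_{t+1}}$ for $t \leq m+1$; consequently $h_{j_1 + 1}, \ldots, h_{j_{m+1} + 1}$ are $m + 1$ distinct positions lying inside $U_c$ (all are bounded above by $h_{j_{m+2}}$), each starting a potential witness whose endpoint lies at or past $\tau_i + c + 1$.

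Now apply pigeonhole via $\mathcal{A}$. Consider the Myhill--Nerode-style equivalence induced on pairs (starting position $x$, finite portion of $w$ read between $x$ and $\tau_i + c + 1$): two such pairs are equivalent iff, for every continuation of $w$ past $\tau_i + c + 1$, the set of endpoints $\ell$ making $\psi_1(x, \ell)$ true coincides. A direct consequence of the minimality clause inside $\psi_1$ is that for any given $x$ there is at most one such $\ell$, hence equivalent starts whose endpoints both lie past $\tau_i + c + 1$ must share that endpoint. Distributing the $m + 1$ distinct starts $h_{j_t + 1}$ among at most $m$ classes yields $t < t'$ with $\ell_{j_t + 1} = \ell_{j_{t'} + 1} =: \ell^\star$. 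But then $\psi_1(h_{j_t + 1}, \ell^\star)$ and $\psi(h_{j_{t'} + 1}, \ell^\star)$ both hold with $h_{j_t + 1} < h_{j_{t'} + 1} \leq \ell^\star$, directly violating the minimality clause of $\psi_1$ at $(h_{j_t + 1}, \ell^\star)$.

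The main obstacle I anticipate is making this DFA-state / Myhill--Nerode step precise, because $\psi_1$ is a future formula and its truth at $(x', x'')$ can depend on positions past $x''$. The remedy is to treat $\mathcal{A}$ as operating on suffixes and to work with the equivalence over (starting-position, finite-prefix-read) pairs, which by construction is congruent with respect to \emph{all} continuations past $\tau_i + c + 1$; the minimality of $\mathcal{A}$ then caps the number of classes at $m$. Once this is in place, the remaining pieces---uniqueness of $\ell$ per $x$ and the subsegment contradiction---follow immediately from the minimality clause inside $\psi_1$.
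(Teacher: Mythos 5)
Your proof is correct and rests on the same key fact as the paper's sketch: the minimality built into $\psi_1$ together with the $m$ states of its DFA prevent more than $m$ potential witnesses from crossing a common position, which is exactly what your Myhill--Nerode pigeonhole establishes (and what the paper leaves implicit behind ``contradiction with the minimality of potential witnesses''). The only difference is bookkeeping---you partition $[\tau_i,\tau_i+a]$ into $a+1$ unit intervals and bound each by $m+1$, whereas the paper observes that among any $m+2$ consecutive positions satisfying $\psi_2^{\geq 1}(x)$ the timestamps must advance by at least $1$---and both give the same bound $(m+1)\cdot(a+1)$.
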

\begin{proof}[Proof sketch.]
   Any point cannot intersect with more than $m$ potential witnesses for $\vartheta^\psi_{I}$ (otherwise there will be a contradiction with the minimality of potential witnesses), and this implies that 
   if $\rho, h_j \models \psi_2^{\geq 1}(x)$,
then either there is no $\langle h_{j+m+1}, \ell_{j+m+1} \rangle$ or $\tau_{h_{j+m+1}} \geq \tau_{h_{j}} + 1$.
\end{proof}
\noindent We then obtain the following theorem.
\begin{theorem}\label{thm:main}
The property `the future \foone{} formula $\psi(x', x'')$ is satisfied by positions $x'$, $x''$ in $I$ in the future' can be expressed in
$\emitl{}^\textsf{\textup{fut}} \equiv \qtwomlo{}^\textsf{\textup{fut}}$.
\end{theorem}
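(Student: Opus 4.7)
My plan is to follow the template of Section~\ref{sec:pnueli} verbatim, using the formula $\psi_1(x', x'')$ already defined in the excerpt (which enforces minimality of witnesses) together with the start-of-potential-witness predicate $\exists x''\,\psi_1(x',x'')$. This yields a sequence of potential witnesses $\langle h_1,\ell_1\rangle\langle h_2,\ell_2\rangle\dots$ whose members may now overlap, though the minimality clause still forbids nesting. The initial candidate is $\psi_{\textit{wit}} = \eventually_I \exists x''\,\psi_1 \wedge \A^1_I$, where $\A^1$ is the \nfa{} (obtained, via the Kamp/Gabbay theorem for future \fo{}, from the formula $\exists y\,(x<y\wedge\psi_1(y,x'))$) that accepts at positions which are the right endpoint $\ell_j$ of some potential witness starting strictly after $x$. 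As noted in the excerpt, $\psi_{\textit{wit}}$ alone is too weak: it admits undesired configurations in which every potential witness straddles one of the endpoints of $\tau_x+I$.

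To rule these out, I would introduce the gap predicate
\[
\psi_2^{\geq b-a}(x) = \exists y'\,\big(x<y' \wedge d(x,y')\geq b-a \wedge \psi_2(x,y')\big),
\]
with $\psi_2$ as in the excerpt (linking the starts of two consecutive potential witnesses), and apply the lemma preceding the theorem, with threshold $b-a$ in place of $1$, to obtain a uniform bound $K$ (depending on $a$, $b$, and the size $m$ of the minimal \dfa{} for $\psi_1$) on the number of positions in $[0,a]$ at which $\psi_2^{\geq b-a}$ holds. Then, exactly as in the $\pnueli^2$ case, I would set
\[
\psi_{\textit{out}} = \bigvee_{1\leq k\leq K}\Big(\textstyle{\counting_{\leq a}^k}(\A^2_{\geq b-a}) \wedge \neg\textstyle{\counting_{\leq a}^{k+1}}(\A^2_{\geq b-a}) \wedge \B^k_{\geq b}\Big),
\]
where $\A^2$ is the \nfa{} for $\psi_2$ and each $\B^k$ is built from a formula $\psi_2^k(x,x')$ that pins down the first $k$ gap-predicate positions and then takes one further $\psi_2$-step to $x'$; these $\B^k$ are obtained by regarding $\psi_2^{\geq b-a}$ as an atomic proposition and substituting $\A^2_{\geq b-a}$ back afterwards. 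The final formula is $\psi_{\textit{wit}}\wedge\neg\psi_{\textit{out}}$.

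Correctness mirrors the proposition in Section~\ref{sec:pnueli}: if a valid witness exists in $\tau_x+I$ then $\psi_{\textit{wit}}$ holds and every disjunct of $\psi_{\textit{out}}$ fails, since the $(k+1)$-th gap-predicate witness, if it exists, would have its right endpoint strictly inside $\tau_x+I$; conversely, if $\psi_{\textit{wit}}$ holds but no potential witness lies entirely in $\tau_x+I$, the straddling configuration forces at least one gap of length $\geq b-a$ between consecutive start points in $[0,a]$, triggering some disjunct of $\psi_{\textit{out}}$. The main obstacle I anticipate is verifying the generalised lemma itself: unlike the $\pnueli^2_{(a,a+1)}$ case where consecutive potential witnesses are disjoint, here up to $m$ of them may overlap at a single position, and the bound must be established by exploiting the minimality clause in $\psi_1$---precisely the role Shelah's composition method~\cite{Shelah1975} plays in~\cite{Hirshfeld06, HirshfeldR12}. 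Once that bound is in place, translating the auxiliary future-\fo{} formulas into aperiodic \nfa{}s for use inside \emitl{}$^\textsf{fut}$ automata modalities is routine via Kamp/Gabbay.
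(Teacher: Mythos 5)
Your proposal matches the paper's own argument: the paper likewise reuses the Section~\ref{sec:pnueli} template with $\psi_1$ enforcing minimality, the gap predicate $\psi_2^{\geq 1}$, and a counting-based $\varphi_{\text{out}}$, and the ``generalised lemma'' you flag as the main obstacle is exactly the lemma the paper proves --- a bound of $(m+1)\cdot(a+1)$ on the gap-predicate positions before $\tau_i+a$, obtained from the observation that minimality forbids any position from intersecting more than $m$ potential witnesses, where $m$ is the size of the minimal \dfa{} for $\psi_1$. The only cosmetic difference is that you handle a general interval $\langle a,b\rangle$ directly via the threshold $b-a$, whereas the paper works with unit-length intervals $(a,a+1)$; both are standard and equivalent here.
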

\noindent The theorem also holds for the general case where $\psi(x', x'')$ 
is a non-future \foone{} formula; in this case, the property can be expressed in $\emitl{} \equiv \qtwomlo{}$.

\section{Expressing $\counting_I^k$ in $\tlc{}^\textsf{\textmd{fut}}$}\label{sec:unilateral}

From~\cite{Ho19} we know that in the pointwise semantics, (aperiodic) \emitl{} (or \qtwomlo{}) formulae can be rewritten into simpler equivalent formulae where all intervals are unilateral, and in fact it suffices to use $[0, b\rangle$ and $[0, \infty)$~\cite{HoMadnani23}.
For the aperiodic case, such a formula can even be expressed with the simpler counting modalities as below, if we allow both the future and past versions of them:
\begin{itemize}
\item $\counting^k_I$ and $\pcounting^k_I$ with $I = (0, 1)$ in the continuous semantics~\cite{Hirshfeld06, HirshfeldR12};
or
\item $\counting^k_I$ and $\pcounting^k_I$ with $I = [0, b \rangle$ in the pointwise semantics~\cite{HoMadnani23}.
\end{itemize}
We now show that
for the special case of $\counting_I^k$, i.e.~when the $\qtwomlo{}^\textsf{fut}$ formula in question is a $\tlci{}^\textsf{fut}$ formula,  
we can do the same \emph{with only the future modalities}; this can be seen as a strict generalisation of the `well-known' reduction from $\until_I$ with $\langle a, b \rangle$ to $\until_I$ with $I = [0, b \rangle$ discussed earlier~\cite{HenRas98, HirshfeldR99, Hirshfeld06}.
In the presentation below we will focus on the pointwise case, where some additional conditions must be satisfied (as explained below), but these conditions are automatically satisfied in the continuous semantics. 

\paragraph{Expressing $\eventually_I$ with $I = \langle a, b \rangle$.}
We start by rewriting the `eventually' modalities $\eventually_I$, which can actually be regarded as a special case of $\counting_I^k$ with $k = 1$~\cite{Hirshfeld06}; for simplicity, let us consider a subformula $\eventually_I \varphi$ where $\varphi$ is in unilateral 
$\mitl{}^\textsf{fut}$ and $I = (a, a + 1)$, $a \geq 1$.
It is well known that in the pointwise semantics, such modalities cannot be expressed in  unilateral \mitl{}~\cite{Raskin1999}. To overcome this apparent difficulty,  
let us define a family of formulae for all $m \in \{0, \dots, a - 1 \}$:
\begin{IEEEeqnarray*}{rCl}
\Phi^0 & = & \{ \varphi \} \;, \\
\Phi^{m+1} & = &  \{ \nextx_{>0} \top \wedge \neg \varphi^m \until_{\leq 1} \varphi^m \wedge \neg \varphi^m \until_{\geq 1} \varphi^m, \globally_{(0, 1)} \varphi^m \, 
\mid \, \varphi^m \in \Phi^m \text{ or } \neg \varphi^m \in \Phi^m \} \;.
\end{IEEEeqnarray*}
All these formulae are in unilateral $\mitl{}^\textsf{fut}$: $\globally_{(0, 1)} \varphi^{m} \equiv (\nextx_{> 0} \top \wedge  \globally_{[0, 1)} \varphi^{m}) \vee \eventually_{\leq 0} \globally_{[0, 1)} \varphi^{m}$.
Additionally, we assume that the timed word
$\rho=(\sigma_i,\tau_i)_{i \geq 1}$ in question satisfies the following condition:
\begin{itemize}
    \item For every $m \in \{0, \dots, a - 1 \}$ and $\varphi^m \in \Phi^m$, if $\rho, j \models \varphi^m$ and
    $\rho, j' \centernot \models \varphi^m$ for all $j' < j$ with $d(j', j) < 1$, then there exists $i$ in $\rho$ such that $d(i, j) = 1$  (unless $d(1, j) < 1$).
\end{itemize}
We note that in practical applications, this should not be a severe limitation---for example in model checking, if the system is modelled as a timed automaton~\cite{AluDil94}, one can 
simply add a self-loop labelled with an extra `empty' letter $\epsilon$ to each location, and use the following formula (which is easily expressible in unilateral $\mitl{}^\textsf{fut}$) as a precondition:
\begin{IEEEeqnarray*}{rCll}
\vartheta^{\eventually} & = & \bigwedge_{\substack{\varphi^m \in \Phi^m \\ m \in \{0, \dots, a - 1\}}} \neg \exists x \, \exists x' \, \bigg( & x < x' \wedge \not \exists x'' \, (x < x'' < x') \\ & & & {} \wedge \exists y \, \Big(x < y \wedge d(x, y) > 1 \wedge d(x', y) < 1 \wedge \varphi^m(y) \wedge \not \exists z \, \big(x < z < y \wedge \varphi^m(z) \big) \Big) \bigg) \;.  
\end{IEEEeqnarray*}
Intuitively, $\vartheta^{\eventually}$ 
rules out the situations when $\nextx_{>0} \top \wedge \neg \varphi^m \until_{\leq 1} \varphi^m \wedge \neg \varphi^m \until_{\geq 1} \varphi^m \in \Phi^{m+1}$ should hold at  $x''$, but $x''$ does not exist in $\rho$.
With the condition in place, we now show that $\eventually_{\langle a - m, a-m+1 \rangle} \varphi^{m'}$ where $\varphi^{m'} \in \Phi^{m'}$ can be expressed in unilateral $\mitl{}^\textsf{fut}$ for $m \in \{0, \dots, a\}$ and $m' \leq m$.
For the base step $m = a$, note that $\eventually_{(0, 1\rangle} \varphi^{m'} \equiv (\nextx_{> 0} \top \wedge  \eventually_{[0, 1\rangle} \varphi^{m'}) \vee \eventually_{\leq 0} (\nextx_{> 0} \top \wedge  \eventually_{[0, 1\rangle} \varphi^{m'})$.
For the inductive step (from $m + 1$ to $m$), suppose that we want to express $\rho, i \models \eventually_{(a-m, a-m+1)} \varphi^m$ where $\varphi^m \in \Phi^m$ and let $\ell > i$ be the \emph{minimal} position such that $\rho, \ell \models \varphi^m$ and $d(i, \ell) \in (a-m, a-m + 1)$ (the arguments for other types of intervals are exactly similar). We can then essentially follow~\cite{Ho19} but only need to  consider the cases below:
\begin{itemize}
    \item There exists (a maximal) $j$, $i < j < \ell$ such that $d(j, \ell) = 1$ and $\rho, j \models \nextx_{>0} \top \wedge \neg \varphi^m \until_{\leq 1} \varphi^m \wedge \neg \varphi^m \until_{\geq 1} \varphi^m$: we have
\[
\rho, i \models \zeta_1 = \eventually_{(a-m-1, a-m)} (\nextx_{>0} \top \wedge \neg \varphi^m \until_{\leq 1} \varphi^m \wedge \neg \varphi^m \until_{\geq 1} \varphi^m)
\]
where $\nextx_{>0} \top \wedge \neg \varphi^m \until_{\leq 1} \varphi^m \wedge \neg \varphi^m \until_{\geq 1} \varphi^m \in \Phi^{m+1}$.

\item There exists $j$, $i < j < \ell$ such that $d(j, \ell)< 1$, $d(i, j) \in (a-m-1, a-m]$ and $\rho, j \models \varphi^{m}$: we have
\[
\rho, i \models \zeta_2 = \eventually_{(a-m-1, a-m]} \varphi^m \wedge 
\neg \eventually_{(a-m-1, a-m]} \globally_{(0, 1)} (\neg \varphi^m)
\]
where $\globally_{(0, 1)} (\neg \varphi^m) \in \Phi^{m+1}$.
\end{itemize}
The equivalent formula is $\zeta_1 \vee \zeta_2$, which can be rewritten into a unilateral $\mitl{}^\textsf{fut}$ formula by the induction hypothesis.
It follows that $\eventually_{\langle a, a+1 \rangle} \varphi^{0}$, where $\varphi^{0} = \varphi \in \Phi^{0}$ is an arbitrary $\mitl{}^\textsf{fut}$ formula, can be expressed in unilateral $\mitl{}^\textsf{fut}$, as desired.

\paragraph{Expressing $\counting^k_I$ with $I = \langle a, b \rangle$.}
We now consider a subformula $\counting_I^k \psi$ where $\psi$ is in $\tlc{}^\textsf{fut}$, $k \geq 2$, and $I = (a, a + 1)$, $a \geq 1$.
Define a family of formulae for all $m \in \{1, \dots, a - 1 \}$:
\begin{IEEEeqnarray*}{rCl}
\Psi^1 & = & \{ (\nextx_{>0} \top \vee \nextx_{\leq 0} \psi) \wedge \counting_{[0, 1]}^k \psi \wedge \neg \counting_{[0, 1)}^k \psi \} \;, \\
\Psi^{m+1} & = &  \{ \nextx_{>0} \top \wedge \neg \psi^m \until_{\leq 1} \psi^m \wedge \neg \psi^m \until_{\geq 1} \psi^m, \globally_{(0, 1)} \psi^m \,  \mid \, \psi^m \in \Psi^m \text{ or } \neg \psi^m \in \Psi^m \} \;.
\end{IEEEeqnarray*}
All these formulae are in $\tlc{}^\textsf{fut}$.
Now we assert that 
$\rho=(\sigma_i,\tau_i)_{i \geq 1}$  satisfies the following conditions:
\begin{enumerate}[label={(\bfseries C\arabic*)}]
    \item \label{item:cond1} If $\rho, j \models \psi$ and
    there are
    \begin{itemize}
    \item 
        less than $k$ positions
        $j' < j$ with $0 < d(j', j) < 1$ such that
        $\rho, j' \models \psi$, and
    \item 
        at least $k$ positions
        $j' \leq j$ with $0 \leq d(j', j) < 1$ such that
        $\rho, j' \models \psi$,
    \end{itemize}
    then there exists $i$ in $\rho$ such that $d(i, j) = 1$  (unless $d(1, j) < 1$).
    
    \item \label{item:cond2} For every $m \in \{1, \dots, a - 1 \}$ and $\psi^m \in \Psi^m$, if $\rho, j \models \psi^m$ and
    $\rho, j' \centernot \models \psi^m$ for all $j' < j$ with $d(j', j) < 1$, then there exists $i$ in $\rho$ such that $d(i, j) = 1$  (unless $d(1, j) < 1$).
\end{enumerate}
As before, we can use $\vartheta^{\eventually}$ (trivially modified so that the conjunction ranges over $m \in \{1, \dots, a - 1 \}$) to enforce the second condition.
For the first condition we assert the formula
\[
\varphi^{\counting} = \neg \Big(
\weventually \big(\nextx_{> 0} (\neg \psi) \wedge \neg \counting_{[0, 1]}^k \psi \wedge \nextx \counting_{[0, 1)}^k \psi\big)
\vee
\weventually (\nextx_{> 0} \psi \wedge \neg \counting_{[0, 1]}^{k} \psi \wedge \nextx \counting_{[0, 1)}^{k-1} \psi)
\Big) \;.
\]
\begin{lemma}\label{lem:countingcondition}
$\rho, 1 \models \varphi^{\counting}$ iff the first condition above holds.     
\end{lemma}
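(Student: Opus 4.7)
The plan is to prove both directions of the biconditional by unpacking the semantics of each disjunct of $\varphi^{\counting}$ and relating witnesses of these disjuncts to failures of condition~\ref{item:cond1}. One geometric fact drives the entire argument: between two consecutive positions $p$ and $p+1$ of $\rho$ with $\tau_{p+1} > \tau_p$, no event of $\rho$ has a timestamp in the open interval $(\tau_p, \tau_{p+1})$. In particular, any real number $t \in (\tau_p, \tau_{p+1})$ is a ``gap'' where no event can live, and the two disjuncts of $\varphi^{\counting}$ are designed to catch a sliding-window threshold crossing for the count of $\psi$-events exactly across such a gap.

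For the ``only if'' direction I would argue contrapositively. Assume condition~\ref{item:cond1} fails at some $\psi$-position $j$, so there are $<k$ $\psi$-events strictly before $j$ in the open window $(\tau_j - 1, \tau_j)$ but $\geq k$ $\psi$-events at positions $\leq j$ in the half-closed window $(\tau_j - 1, \tau_j]$, no event lies at time $\tau_j - 1$, and $d(1, j) \geq 1$. Let $p$ be the last position of $\rho$ with $\tau_p < \tau_j - 1$; the hypotheses force $\tau_p < \tau_j - 1 < \tau_{p+1}$. I would then case-split on whether $\psi$ holds at $p+1$: if not, I would verify the first disjunct at $p$; if yes, I would verify the second disjunct at $p$. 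In either case the key calculation is that, thanks to the ``no events in $(\tau_p, \tau_{p+1})$'' fact, the $\psi$-events originally counted in $(\tau_j - 1, \tau_j]$ all have timestamps in $[\tau_{p+1}, \tau_{p+1}+1)$ and lie at positions strictly greater than $p+1$ (with $p+1$ itself accounting for one of them in the second case). This yields $\counting_{[0,1)}^{k}\psi$ (resp.\ $\counting_{[0,1)}^{k-1}\psi$) at $p+1$. Conversely, $\neg\counting_{[0,1]}^{k}\psi$ at $p$ follows because $\tau_p + 1 < \tau_j$ confines any potential witness of $\counting_{[0,1]}^{k}\psi$ at $p$ to the strict window $(\tau_j - 1, \tau_j)$, where fewer than $k$ $\psi$-events live.

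For the ``if'' direction, again contrapositively, assume one of the two disjuncts is witnessed at some position $p$. Let $r_1 < r_2 < \dots$ enumerate, in position order, the $\psi$-events at positions strictly greater than $p+1$ whose timestamps lie in $[\tau_{p+1}, \tau_{p+1}+1)$. Combining $\neg\counting_{[0,1]}^{k}\psi$ at $p$ with the ``next'' counting formula at $p+1$ forces at least one of the first $k$ (resp.\ $k-1$) of these events to have timestamp strictly greater than $\tau_p + 1$; indeed, otherwise these events together with (in the second-disjunct case) $p+1$ itself would supply $k$ witnesses of $\counting_{[0,1]}^{k}\psi$ at $p$. I would then take $j := r_k$ for the first-disjunct case and $j := r_{k-1}$ for the second-disjunct case. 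The remaining verification is that $j$ violates condition~\ref{item:cond1}: the choice guarantees $\tau_j - 1 \in (\tau_p, \tau_{p+1})$, so no event of $\rho$ sits at time $\tau_j - 1$; the count in $(\tau_j - 1, \tau_j]$ at $j$ reaches $k$ via $r_1, \dots, r_k$ (resp.\ via $p+1, r_1, \dots, r_{k-1}$); and the count in the strict window $(\tau_j - 1, \tau_j)$ stays below $k$ because at most $k-1$ of these witnesses have timestamp strictly less than $\tau_j$.

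The main obstacle will be the careful bookkeeping around events sharing a common timestamp and the asymmetric open/closed endpoints of $\counting_{[0,1]}^{k}$ versus $\counting_{[0,1)}^{k}$: this asymmetry matters precisely at time $\tau_{p+1}$ (closed at $p$ but open when shifted to $p+1$) and at time $\tau_j$ (the distinguishing boundary in condition~\ref{item:cond1}), and it is exactly what forces the two-way split on whether $\psi$ holds at $p+1$ and the corresponding two disjuncts of $\varphi^{\counting}$.
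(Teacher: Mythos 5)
Your proposal is correct and follows essentially the same route as the paper's own proof: both directions are argued contrapositively, using the adjacent pair $p, p+1$ (the paper's $x, x'$) straddling the time point $\tau_j - 1$, with the case split on whether $\psi$ holds at $p+1$ matching the two disjuncts of $\varphi^{\counting}$, and the witness $j$ in the converse direction chosen as the $k$-th (resp.\ $(k-1)$-th) $\psi$-position after $p+1$. Your version is in fact somewhat more explicit than the paper's about the endpoint bookkeeping for $\counting_{[0,1]}^{k}$ versus $\counting_{[0,1)}^{k}$, but no new idea is involved.
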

\begin{proof}
Assume that the first condition is violated and there are two adjacent positions $x, x' < j$ such that $d(x, j) > 1$ and $d(x', j) < 1$. Consider the following cases:
\begin{itemize}
    \item $\rho, x' \centernot \models \psi$:
    It is clear that $\rho, x' \models \counting_{[0, 1)}^k \psi$, since the covered period may contain positions $j' > j$ with $d(j, j') > 0$, and excluding $x'$ makes no difference.
    It is also clear that $\rho, x \centernot \models \counting_{[0, 1]}^k \psi$ as the covered period may only contain fewer positions. We thus have $\rho, i \centernot \models \varphi^{\counting}$.
    
    \item $\rho, x' \models \psi$:
    It is clear that $\rho, x' \models \counting_{[0, 1)}^{k-1} \psi$ as the covered period must contain at least $k-1$ positions satisfying $\psi$ after excluding $x'$.
    It is also clear that $\rho, x \centernot \models \counting_{[0, 1]}^k \psi$ as the covered period may only contain fewer positions. We thus have $\rho, i \centernot \models \varphi^{\counting}$.
\end{itemize}
For the other direction, consider the following cases:
\begin{itemize}
    \item $\rho, x \models \nextx_{> 0} (\neg \psi) \wedge \neg \counting_{[0, 1]}^k \psi \wedge \nextx \counting_{[0, 1)}^k \psi$ for some position $x$: Let the next position be $x'$.
    It is clear that there is at least one position satisfying $\psi$ in $(\tau_x + 1, \tau_{x'} + 1)$.
    Let $j$ be the position such that $|\{ j' \, \mid \, x < j' \leq j \text{ and } \rho, j' \models \psi \}| = k$. It is clear that $j$ satisfies the statements in the condition, but by assumption, there is no $i$ in $\rho$ such that $d(i, j) = 1$.
    
    \item $\rho, x \models \nextx_{> 0} \psi \wedge \neg \counting_{[0, 1]}^{k} \psi \wedge \nextx \counting_{[0, 1)}^{k-1} \psi$ for some position $x$: Let the next position be $x'$.
    Once again it is clear that there is at least one position satisfying $\psi$ in $(\tau_x + 1, \tau_{x'} + 1)$. The argument is identical to the previous case. \qedhere
\end{itemize}
\end{proof}
\noindent 
We say that a segment $\langle h, \ell \rangle$ is a witness for $\counting_{(a, a+1)}^k \psi$ at $i$ if $h < \ell$, $\rho, h \models \psi$, $\rho, \ell \models \psi$, $|\{ j \, \mid \, h \leq j \leq \ell \text{ and } \rho, j \models \psi \}| = k$,
and both $\tau_h, \tau_\ell \in \tau_i + (a, a+1)$.
Similarly as before, we can write an untimed (finite-word) $\ltl{}^\textsf{fut}$ 
formula $\psi_1$ that holds at all the starting points $h$ of all the potential witnesses (ignoring the timing requirement) for $\counting_{(a, a+1)}^k \psi$---in this case, it is simply an untimed (finite-word) \ltl{} formula that counts exactly $k$ occurrences of $\psi$.
Based on this, we can give an initial attempt to express 
$\counting_{(a, a+1)}^k \psi$, similar to what we did for $\pnueli^2_{(a, a+1)}(P, Q)$ 
using $\qtwomlo{}^\textsf{fut}$
in Section~\ref{sec:pnueli}:
\begin{IEEEeqnarray*}{rCll}
\varphi_{\textit{wit}} & = & 
\eventually_{(a, a+1)} \psi_1 \wedge
 \Big( & \eventually_{(a-1, a)} \big((\nextx_{>0} \top \vee \nextx_{\leq 0} \psi) \wedge \counting_{[0, 1]}^k \psi \wedge \neg \counting_{[0, 1)}^k \psi\big) \\
 & & & {} \vee \big( \eventually_{(a-1, a]} \psi \wedge \globally_{(a-1, a]} (\psi \implies \counting_{[0, 1)}^k \psi) \big) \Big)  \;.
\end{IEEEeqnarray*}
Here, however, the second conjunct is more involved as we must refrain from using (aperiodic) automata modalities. We now prove some propositions about the correctness of $\varphi_{\textit{wit}}$,
based on the assumption that $\rho$
satisfies~\ref{item:cond1} and ~\ref{item:cond2}.

\begin{proposition}
$\rho, i \models \eventually_{(a-1, a]} \psi \wedge \globally_{(a-1, a]} (\psi \implies \counting_{[0, 1)}^k \psi)$ implies $\rho, i \models \counting_{(a, a+1)}^k \psi$.
\end{proposition}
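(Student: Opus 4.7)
The plan is to extract $k$ positions lying in $\tau_i + (a, a+1)$ at which $\psi$ holds, thereby witnessing $\counting^k_{(a, a+1)} \psi$ at $i$. The key idea is to apply the globally conjunct to a \emph{carefully chosen} witness of the eventually conjunct. Specifically, I would set $S = \{ j : \tau_j \in \tau_i + (a-1, a] \text{ and } \rho, j \models \psi \}$; by $\rho, i \models \eventually_{(a-1, a]} \psi$ this set is nonempty, and since any bounded interval contains only finitely many timestamps of a non-Zeno timed word, $S$ is finite, so it has a largest element $j^{\ast}$.

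Next, because $\tau_{j^{\ast}} \in \tau_i + (a-1, a]$ and $\rho, j^{\ast} \models \psi$, instantiating the globally conjunct at $j^{\ast}$ yields $\rho, j^{\ast} \models \counting^k_{[0, 1)} \psi$. Unpacking the definition of the counting modality produces positions $j^{\ast} < j_1 < \dots < j_k$ with $\tau_{j_r} - \tau_{j^{\ast}} \in [0, 1)$ and $\rho, j_r \models \psi$ for each $r \in \{1, \dots, k\}$.

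It remains to verify that each $\tau_{j_r}$ lies in $\tau_i + (a, a+1)$. The upper bound is immediate: $\tau_{j_r} < \tau_{j^{\ast}} + 1 \leq \tau_i + a + 1$. For the lower bound, note that $j_r > j^{\ast}$ gives $\tau_{j_r} \geq \tau_{j^{\ast}} > \tau_i + (a-1)$; if we also had $\tau_{j_r} \leq \tau_i + a$, then $\tau_{j_r} \in \tau_i + (a-1, a]$, and combined with $\rho, j_r \models \psi$ this would place $j_r \in S$ with $j_r > j^{\ast}$, contradicting the maximality of $j^{\ast}$. Hence $\tau_{j_r} \in \tau_i + (a, a+1)$ for every $r$, and the $k$ positions $j_1, \dots, j_k$ witness $\rho, i \models \counting^k_{(a, a+1)} \psi$.

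I do not anticipate a serious obstacle here: the proof essentially reduces to picking the \emph{last} $\psi$-position in $\tau_i + (a-1, a]$ and reading off the $k$ witnesses supplied by $\counting^k_{[0, 1)} \psi$ at that point; assumptions \ref{item:cond1} and \ref{item:cond2} on $\rho$ are not needed for this direction. The only mild subtlety is that the choice of $j^{\ast}$ as the last admissible position is essential, for otherwise later $\psi$-positions produced by the inner counting modality could fall back inside $\tau_i + (a-1, a]$ and fail the desired lower bound.
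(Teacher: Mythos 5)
Your proof is correct and follows essentially the same route as the paper's: pick the \emph{maximal} $\psi$-position $j^{\ast}$ with $\tau_{j^{\ast}} \in \tau_i + (a-1, a]$, apply the $\globally$-conjunct there, and observe that the $k$ witnesses of $\counting^k_{[0,1)}\psi$ at $j^{\ast}$ are forced into $\tau_i + (a, a+1)$ by the maximality of $j^{\ast}$. You merely spell out the bound-checking that the paper leaves implicit, and you are right that conditions \ref{item:cond1} and \ref{item:cond2} are not needed for this direction.
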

\begin{proof}
Let $j$ be the \emph{maximal} position in $\rho$ such that $d(i, j) \in (a-1, a]$ and $\rho, j \models \psi$. We have $\rho, j \models \counting_{[0, 1)}^k \psi$, and it is clear that $\tau_i + (a, a+1)$ contains at least $k$ positions satisfying $\psi$.
\end{proof}

\begin{proposition}\label{prop:counttowit1}
$\rho, i \models \counting_{(a, a+1)}^k \psi \wedge \neg \eventually_{(a-1, a]} \psi$ implies that 
$\rho, i \models \eventually_{(a-1, a)} \big((\nextx_{>0} \top \vee \nextx_{\leq 0} \psi) \wedge \counting_{[0, 1]}^k \psi \wedge \neg \counting_{[0, 1)}^k \psi\big)$.
\end{proposition}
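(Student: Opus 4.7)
The plan is to construct the required witness position $x$ explicitly, exploiting condition \ref{item:cond1} to align it exactly $1$ time-unit before the $k$-th $\psi$-event. First I would enumerate all positions $q_1 < q_2 < \dots$ strictly after $i$ with $\rho, q_m \models \psi$, in order of index. The first conjunct of the hypothesis places at least $k$ of these in $\tau_i + (a, a+1)$, and the second conjunct keeps all of them out of $\tau_i + (a-1, a]$. Letting $j$ be the smallest index with $\tau_{q_j} > \tau_i + a$, the positions $q_j, \dots, q_{j+k-1}$ would therefore all lie in $\tau_i + (a, a+1)$; I would then set $p := q_{j+k-1}$, the ``$k$-th witness''.

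The central step is to invoke \ref{item:cond1} at $p$. The $k-1$ positions $q_j, \dots, q_{j+k-2}$ all fall in $(\tau_p - 1, \tau_p)$ because $\tau_p - \tau_{q_j} < 1$; any earlier $\psi$-position $q_m$ (with $m < j$) has timestamp at most $\tau_i + (a-1) < \tau_p - 1$ by the $\neg \eventually_{(a-1, a]} \psi$ hypothesis, so contributes nothing in that interval. Thus the hypotheses of \ref{item:cond1} are met: exactly $k - 1 < k$ positions $j' < p$ with $0 < d(j', p) < 1$ satisfy $\psi$, and together with $p$ itself they make up $k$ positions $j' \leq p$ with $0 \leq d(j', p) < 1$ satisfying $\psi$. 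Since $a \geq 1$ forces $d(1, p) > a \geq 1$, the ``unless'' clause is inactive, and \ref{item:cond1} yields a position of $\rho$ at timestamp $\tau_p - 1$. I would take $x$ to be the largest-index such position; then $\tau_{x+1} > \tau_x$, so $\rho, x \models \nextx_{>0} \top$.

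Finally I would verify the remaining conjuncts. The timing $d(i, x) = \tau_p - 1 - \tau_i \in (a-1, a)$ is immediate, and $q_j, \dots, q_{j+k-2}, p$ give the $k$ witnesses for $\counting^k_{[0, 1]} \psi$ at $x$. The main obstacle I expect is verifying $\neg \counting^k_{[0, 1)} \psi$ at $x$: I would need to rule out any further $\psi$-position in $[\tau_x, \tau_x + 1)$, using $\neg \eventually_{(a-1, a]} \psi$ once more to exclude the timestamp $\tau_x$ itself (which lies in $\tau_i + (a-1, a]$) and the enumeration of $q$s to exclude all times in $(\tau_{q_{j+k-2}}, \tau_p)$. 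The critical fact is that $p$ lands \emph{exactly} at $\tau_x + 1$ and is therefore not in the half-open window $[\tau_x, \tau_x + 1)$; this alignment is precisely what \ref{item:cond1} supplies, and is the reason the pointwise version of the statement requires the extra assumption in the first place.
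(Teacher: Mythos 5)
Your proof is correct and takes essentially the same route as the paper: identify the first and the $k$-th $\psi$-positions inside $\tau_i+(a,a+1)$ (the paper's $j_1$ and $j_k$, your $q_j$ and $p$), invoke condition \ref{item:cond1} at the $k$-th one to obtain a position exactly one time unit earlier, and check that the three conjuncts hold there. You merely spell out the verification of $\counting^k_{[0,1]}\psi$, $\neg\counting^k_{[0,1)}\psi$ and the $\nextx_{>0}\top$ disjunct (via the largest-index choice at timestamp $\tau_p-1$), which the paper leaves implicit.
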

\begin{proof}
 Let $j_1$ be the minimal position in $\rho$ such that $d(i, j_1) \in (a, a+1)$ and $\rho, j_1 \models \psi$, and $j_k$ be the position in $\rho$ such that $\sigma_{j_1} \dots \sigma_{j_k} \models \psi_1$ and $\rho, j_k \models \psi$. By Lemma~\ref{lem:countingcondition},
 the first condition above holds and there is a position $j'$ such that $d(j', j_{k}) = 1$ and
 $\rho, j' \models (\nextx_{>0} \top \vee \nextx_{\leq 0} \psi) \wedge \counting_{[0, 1]}^k \psi \wedge \neg \counting_{[0, 1)}^k \psi$.
\end{proof}

\begin{proposition}\label{prop:counttowit2}
$\rho, i \models \counting_{(a, a+1)}^k \psi \wedge  \eventually_{(a-1, a]} \psi$ implies that
$\rho, i \models \eventually_{(a-1, a)} \big((\nextx_{>0} \top \vee \nextx_{\leq 0} \psi) \wedge \counting_{[0, 1]}^k \psi \wedge \neg \counting_{[0, 1)}^k \psi\big)$ or $\rho, i \models \eventually_{(a-1, a]} \psi \wedge \globally_{(a-1, a]} (\psi \implies \counting_{[0, 1)}^k \psi)$. 
\end{proposition}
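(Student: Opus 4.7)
The plan is to prove the contrapositive of the second disjunct. I would first assume $\rho, i$ satisfies the hypotheses but fails $\eventually_{(a-1, a]} \psi \wedge \globally_{(a-1, a]}(\psi \implies \counting_{[0, 1)}^k \psi)$; since $\eventually_{(a-1, a]} \psi$ holds, some $\psi$-position $j$ with $d(i, j) \in (a-1, a]$ must witness $\rho, j \centernot\models \counting_{[0, 1)}^k \psi$. Let $j^*$ be the \emph{maximal} such $\psi$-position (well-defined since only finitely many positions fall in $\tau_i + (a-1, a]$), and let $j'$ be the immediately succeeding $\psi$-position in $\rho$. I would then argue that $\rho, j' \models \counting_{[0, 1)}^k \psi$ in either sub-case: if $d(i, j') \in (a-1, a]$, maximality of $j^*$ forces it; if $\tau_{j'} > \tau_i + a$, then $j'$ coincides with $j^{(1)}$ from the proof of Proposition~\ref{prop:counttowit1}, and the $k$ witnesses $j^{(1)}, \dots, j^{(k)}$ for $\counting_{(a, a+1)}^k \psi$ at $i$ all fit inside $[\tau_{j'}, \tau_{j'} + 1)$.

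Let $p_1 = j', p_2, \dots, p_k = q$ denote the $k$ consecutive $\psi$-positions starting from $j'$, so $\tau_q < \tau_{j'} + 1$. The key inequality I would establish is $\tau_q > \tau_{j^*} + 1$ strictly: otherwise $j^*$ together with $p_1, \dots, p_k$ (when $\tau_q < \tau_{j^*} + 1$) or $p_1, \dots, p_{k-1}$ (when $\tau_q = \tau_{j^*} + 1$) would populate $[\tau_{j^*}, \tau_{j^*} + 1)$ with at least $k$ $\psi$-positions, contradicting $\rho, j^* \centernot\models \counting_{[0, 1)}^k \psi$. Combined with $\tau_q < \tau_{j'} + 1 \leq \tau_i + a + 1$, this pins $\tau_q - 1$ strictly inside $\tau_i + (a-1, a)$.

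I would then apply condition~\ref{item:cond1} to $q$: the $\psi$-positions $p_1, \dots, p_{k-1}$ lie in $(\tau_q - 1, \tau_q)$ and $q$ itself is at the right endpoint, while any $\psi$-position with $\tau \leq \tau_{j^*}$ satisfies $\tau \leq \tau_{j^*} < \tau_q - 1$ (so is too far left) and the open interval $(\tau_{j^*}, \tau_{j'})$ contains no $\psi$-position by choice of $j'$. Thus the strict-predecessor count at $q$ is at most $k - 1$ while the total including $q$ is exactly $k$; since $\tau_q > 1$ the exception in~\ref{item:cond1} is vacuous, so some position of $\rho$ has timestamp exactly $\tau_q - 1$. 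Let $p^*$ be the \emph{maximal} such position.

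Finally, I would verify that $p^*$ witnesses the first disjunct. Its timestamp $\tau_q - 1$ already lies in $\tau_i + (a-1, a)$; the $k$ positions $p_1, \dots, p_k$ populate $[\tau_{p^*}, \tau_{p^*} + 1]$, giving $\counting_{[0, 1]}^k \psi$; only $p_1, \dots, p_{k-1}$ (at most $k-1$ of them) fall in $[\tau_{p^*}, \tau_{p^*} + 1) = [\tau_q - 1, \tau_q)$ and no other $\psi$-position intrudes there (by the same exclusion argument), giving $\neg \counting_{[0, 1)}^k \psi$; and maximality of $p^*$ at timestamp $\tau_q - 1$ makes its successor's timestamp strictly greater, so $\nextx_{>0} \top$ holds. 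The hard part will be the third step: choosing both $j^*$ and $p^*$ as maximal and exploiting the strict separation $\tau_q > \tau_{j^*} + 1$ extracted from $\neg \counting_{[0, 1)}^k \psi$ at $j^*$ is exactly what keeps stray $\psi$-positions out of the critical window $(\tau_q - 1, \tau_q)$ and legitimises the application of condition~\ref{item:cond1}.
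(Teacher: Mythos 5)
Your reorganisation of the argument is legitimate and genuinely different from the paper's: the paper walks \emph{downwards} through the $\psi$-positions of $\tau_i + (a-1, a]$ starting from the maximal one $\ell$, pairing $\ell$ with $j_k$, then $\ell'$ with $j_{k-1}$, and so on, terminating either at a witness for the first disjunct or with every such position satisfying $\counting_{[0,1)}^k\psi$; you instead take the contrapositive of the second disjunct, jump directly to the maximal offender $j^*$, and manufacture a single witness at $\tau_q - 1$ via one application of~\ref{item:cond1}. That said, the step you yourself identify as the crux contains an off-by-one error. The witnesses of $\counting^k_{[0,1)}\psi$ at a position must lie \emph{strictly after} that position (this is explicit in $\vartheta^{\counting,k}_I$ and in $\A^{\counting,k}$, and the proof of Lemma~\ref{lem:countingcondition} relies on it when it argues ``after excluding $x'$''). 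Consequently, in the case $\tau_q = \tau_{j^*}+1$ you cannot count $j^*$ among its own witnesses: only the $k-1$ positions $p_1,\dots,p_{k-1}$ lie strictly after $j^*$ inside $[\tau_{j^*}, \tau_{j^*}+1)$, which is perfectly consistent with $\rho, j^* \centernot\models \counting^k_{[0,1)}\psi$. So you only get $\tau_q \geq \tau_{j^*}+1$, not the strict inequality, and the equality case must be handled rather than excluded. The same slip appears when you claim $\rho, j' \models \counting^k_{[0,1)}\psi$ in the sub-case $\tau_{j'} > \tau_i + a$: the $k$ witnesses of $\counting_{(a,a+1)}^k\psi$ at $i$ include $j'$ itself, so only $\counting^{k-1}_{[0,1)}\psi$ follows at $j'$ (what you actually need downstream, namely $\tau_{p_k} < \tau_i + a + 1 \leq \tau_{j'}+1$, does follow directly from the hypothesis, so this second instance is harmless).

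The good news is that the boundary case is salvageable with the machinery you already set up, but not by the reason you give. When $\tau_q = \tau_{j^*}+1$, one still has $\tau_q < \tau_i + a + 1$ (the $m$-th actual witness of $\counting_{(a,a+1)}^k\psi$ at $i$ dominates $p_m$ position-wise, so $\tau_{p_k} \leq \tau_{w_k} < \tau_i + a+1$), whence $\tau_{j^*} = \tau_q - 1 < \tau_i + a$ and the target timestamp $\tau_q - 1$ still lands strictly inside $\tau_i + (a-1, a)$; moreover a position at that timestamp exists trivially (namely $j^*$), so~\ref{item:cond1} is not even needed there. Taking $p^*$ maximal at timestamp $\tau_q - 1 = \tau_{j^*}$ still yields $\counting^k_{[0,1]}\psi \wedge \neg\counting^k_{[0,1)}\psi \wedge \nextx_{>0}\top$, but the exclusion of stray $\psi$-positions from $[\tau_q - 1, \tau_q)$ can no longer be argued by ``$\tau \leq \tau_{j^*} < \tau_q - 1$, so too far left''---it must go through position order instead: every $\psi$-position with timestamp $\leq \tau_{j^*}$ has index at most $j^* \leq p^*$, hence is not strictly after $p^*$ and does not contribute to the count. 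You should add this case split explicitly; as written, the proof asserts a strict separation that can genuinely fail.
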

\begin{proof}
 Let $j_1$ be the minimal position in $\rho$ such that $d(i, j_1) \in (a, a+1)$ and $\rho, j_1 \models \psi$, and $j_k$ is the minimal position in $\rho$ such that there exists $j_1 < \dots < j_k$ where $\rho, j_i \models \psi$ for all $i \in \{1, \dots, k\}$.
Let $\ell$ be the \emph{maximal} position in $\rho$ such that $d(i, \ell) \in (a-1, a]$ and $\rho, \ell \models \psi$. Consider the following cases:
\begin{itemize}
\item $d(\ell, j_k) \geq 1$: By Lemma~\ref{lem:countingcondition}, there exists a position $\ell' \geq \ell$ in $\rho$ such that $d(\ell', j_k) = 1$ and $\rho, \ell' \models (\nextx_{>0} \top \vee \nextx_{\leq 0} \psi) \wedge \counting_{[0, 1]}^k \psi \wedge \neg \counting_{[0, 1)}^k \psi$.
\item $d(\ell, j_k) < 1$: We have $\rho, \ell \models \counting_{[0, 1)}^k \psi$. Now consider $j_{k-1}$ and the largest position $\ell' < \ell$ such that $d(i, \ell') \in (a-1, a]$ and $\rho, \ell' \models \psi$. If $d(\ell', j_{k - 1}) < 1$ then clearly 
$\rho, \ell' \models \counting_{[0, 1)}^k \psi$.
If $d(\ell', j_{k - 1}) \geq 1$, then by 
Lemma~\ref{lem:countingcondition},
 there exists $\ell'' \geq \ell'$ such that $d(i, \ell'') \in (a-1, a]$, $d(\ell'', j_{k-1}) = 1$, and $\rho, \ell'' \models (\nextx_{>0} \top \vee \nextx_{\leq 0} \psi) \wedge \counting_{[0, 1]}^k \psi \wedge \neg \counting_{[0, 1)}^k \psi$. The argument is repeated until some position in $\tau_i + (a-1, a]$ satisfies $(\nextx_{>0} \top \vee \nextx_{\leq 0} \psi) \wedge \counting_{[0, 1]}^k \psi \wedge \neg \counting_{[0, 1)}^k \psi$, or all positions in $\tau_i + (a-1, a]$ satisfying $\psi$ also satisfy $\counting_{[0, 1)}^k \psi$. \qedhere
\end{itemize}
\end{proof}
It remains to strengthen
$\eventually_{(a, a+1)} \psi_1 \wedge
 \eventually_{(a-1, a)} \big((\nextx_{>0} \top \vee \nextx_{\leq 0} \psi) \wedge \counting_{[0, 1]}^k \psi \wedge \neg \counting_{[0, 1)}^k \psi\big)$ so that it implies $\counting_{(a, a+1)}^k \psi$. As before in Section~\ref{sec:pnueli}, we need a formula $\psi_2^{k}$ that refers to two neighbouring potential witnesses---in this case, it is simply an untimed (finite-word) $\ltl{}^\textsf{fut}$ formula that counts exactly $k + 1$ occurrences of $\psi$.
We can then argue that there is an upper bound on the number of positions satisfying $\psi_2^{k, \geq 1}$ 
(easily expressible in $\tlc{}^\textsf{fut}$) before $\tau_i + a$.
In contrast to Section~\ref{sec:pnueli}, however, we need an alternative way to express $\B^k_{\geq a + 1}$.
\begin{lemma}\label{lem:psikgeq1occurrence}
For any $\rho = (\sigma_i,\tau_i)_{i \geq 1}$ over $\Sigma_{\AP}$,
the $\tlc{}^\textsf{fut}$ formula $\psi_2^{k, \geq 1}$ over $\AP$ is satisfied by at most $k \cdot a$ positions $j > i$ with $d(i, j) \in [0, a)$ for any $i \geq 0$.
\end{lemma}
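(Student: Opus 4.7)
The plan is to adapt the combinatorial argument from the analogous lemma in Section~\ref{sec:pnueli}, noting that in the $\counting^k$ case neighbouring potential witnesses overlap on $k-1$ positions rather than being disjoint. Let $p_1 < p_2 < \ldots$ enumerate the positions in $\rho$ where $\psi$ holds, so that the potential witnesses for $\counting_{(a, a+1)}^k \psi$ are exactly the segments $\langle p_s, p_{s+k-1} \rangle$, and two neighbouring potential witnesses have the form $\langle p_s, p_{s+k-1} \rangle$ and $\langle p_{s+1}, p_{s+k} \rangle$. The key observation is that whenever $\psi_2^{k, \geq 1}$ holds at $p_s$, the position $p_{s+k}$ must exist and satisfy $\tau_{p_{s+k}} \geq \tau_{p_s} + 1$: indeed, $\psi_2^k$ anchored at $p_s$ matches exactly the span ending at $p_{s+k}$ (the unique witness containing exactly $k+1$ occurrences of $\psi$ starting from $p_s$), and the $\geq 1$ timing constraint translates directly into $\tau_{p_{s+k}} - \tau_{p_s} \geq 1$.

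Next I would argue by contradiction: suppose $\psi_2^{k, \geq 1}$ holds at $ka + 1$ positions $p_{s_1} < p_{s_2} < \ldots < p_{s_{ka+1}}$, all lying in $[\tau_i, \tau_i + a)$. Since the indices $s_t$ form a strictly increasing sequence of integers, we have $s_{t+k} \geq s_t + k$, and hence, by the key observation and monotonicity of timestamps, $\tau_{p_{s_{t+k}}} \geq \tau_{p_{s_t + k}} \geq \tau_{p_{s_t}} + 1$ for every valid $t$. Iterating this from $t = 1$ yields $\tau_{p_{s_{1+k\ell}}} \geq \tau_{p_{s_1}} + \ell$ for every $0 \leq \ell \leq a$, and taking $\ell = a$ gives $\tau_{p_{s_{1+ka}}} \geq \tau_{p_{s_1}} + a \geq \tau_i + a$, contradicting $p_{s_{1+ka}} \in [\tau_i, \tau_i + a)$.

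The main (and, I expect, the only) obstacle is carefully unpacking the semantics of $\psi_2^{k, \geq 1}$ to confirm the key observation above; once that is in place, the counting argument is a direct generalisation of the $(2a+2)$-bound proof for the Pnueli lemma, with the factor of $k$ replacing the factor of $2$ because each $\psi_2^{k, \geq 1}$-event only advances the witness index by $k$ instead of by $2$ as in the disjoint-witness Pnueli setting.
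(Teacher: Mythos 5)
Your proof is correct: the key observation that $\psi_2^{k,\geq 1}$ holding at $p_s$ forces $\tau_{p_{s+k}} \geq \tau_{p_s} + 1$, combined with the iteration over the strictly increasing indices $s_t$ (advancing time by at least $1$ every $k$ steps), yields exactly the $k \cdot a$ bound for the half-open window $[0,a)$. The paper states this lemma without proof, and your argument is precisely the intended generalisation of its proof of the analogous lemma in Section~\ref{sec:pnueli} (where disjoint witnesses give an advance-by-$2$ step and the $2a+2$ bound for the closed window $[0,a]$), so this is essentially the same approach.
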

\begin{lemma}
For any $\rho = (\sigma_i,\tau_i)_{i \geq 1}$ over $\Sigma_{\AP}$,
the $\tlc{}^\textsf{fut}$ formula $(\nextx_{>0} \top \vee \nextx_{\leq 0} \psi) \wedge \counting_{[0, 1]}^k \psi \wedge \neg \counting_{[0, 1)}^k \psi$ over $\AP$ is satisfied by at most $k \cdot (a + 1) + 1$ positions $j > i$ with $d(i, j) \in [0, a]$ for any $i \geq 0$.
\end{lemma}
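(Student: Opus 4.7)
The plan is to adapt the cluster/distinct-time analysis already developed for Lemma~\ref{lem:psikgeq1occurrence}. The central observation is that every formula-satisfying position $j$ must be accompanied by a $\psi$-event at exactly time $\tau_j + 1$: if $\rho, j \models \counting_{[0, 1]}^k \psi \wedge \neg \counting_{[0, 1)}^k \psi$, then there are at least $k$ $\psi$-events strictly after $j$ with time in $[\tau_j, \tau_j + 1]$ but fewer than $k$ with time in $[\tau_j, \tau_j + 1)$, so the $k$-th such $\psi$-event must sit exactly on the boundary at $\tau_j + 1$.

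With this in hand, I plan to combine two complementary sub-bounds. For a single cluster of formula-satisfying positions sharing a timestamp $\tau$, the first conjunct $\nextx_{>0} \top \vee \nextx_{\leq 0} \psi$ forces the immediate index-successor (within the cluster) of every non-last formula-satisfying position to itself satisfy $\psi$; these $c - 1$ forced $\psi$-events contribute to the $\counting_{[0, 1)}^k \psi$ count at the earliest cluster member, so a cluster contains at most $k$ formula-satisfying positions. For distinct timestamps $\tau_{j_1} < \dots < \tau_{j_r}$ of formula-satisfying positions falling within a window of length strictly less than $1$, the boundary $\psi$-events at $\tau_{j_s} + 1$ for $s < r$ all lie in $(\tau_{j_r}, \tau_{j_r} + 1)$ and are therefore counted at $j_r$; together with $\neg \counting_{[0, 1)}^k \psi$ at $j_r$ this yields $r \leq k$.

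Finally, I would tile $[\tau_i, \tau_i + a]$ by $a + 1$ unit-length windows and sum the per-window contributions, using the two sub-bounds \emph{jointly} so that the cluster-forced $\psi$-events at a single timestamp and the boundary-induced $\psi$-events contributed by the earlier timestamps in the same window are charged against a common $\neg \counting_{[0, 1)}^k \psi$ budget at a suitably chosen witness position. The main obstacle I anticipate is establishing a tight per-window bound of $k$ formula-satisfying positions, rather than the naive quadratic-in-$k$ bound obtained by treating cluster size and distinct-time count multiplicatively; the additional $+1$ in the claimed $k \cdot (a + 1) + 1$ is then intended to account for a formula-satisfying position that may sit exactly at the boundary $\tau_i + a$, whose induced $\psi$-event at $\tau_i + a + 1$ falls outside the tiling.
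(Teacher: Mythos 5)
Your two sub-bounds are both correct (and the boundary observation --- that $\counting_{[0,1]}^k \psi \wedge \neg\counting_{[0,1)}^k \psi$ at $j$ forces a $\psi$-event at exactly $\tau_j + 1$ --- is exactly the right lever), but the step you yourself flag as the ``main obstacle'' is a genuine gap, and the two sub-bounds as stated do not close it. A cluster bound of $k$ and a distinct-timestamp bound of $k$ combine, even with the best single-witness charging, to roughly $2k-1$ per unit window rather than $k$: from the first member of the first cluster you can charge the $\sum_s (c_s - 1)$ intra-cluster forced $\psi$-events (giving $n - r \leq k-1$), and from the first member of the last cluster you can charge the $r-1$ boundary events plus the last cluster's $c_r - 1$ intra-cluster events, but no single witness position sees one forced $\psi$-event \emph{per formula-satisfying position}: the earlier clusters each contribute only one boundary event under your argument, however large they are. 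This yields $n \leq k - 1 + r \leq 2k-1$ per window and hence $(2k-1)(a+1)$ overall, which exceeds the claimed $k(a+1)+1$ for $k \geq 2$.

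The missing idea is a strengthening of your boundary observation: a cluster of size $c$ at timestamp $t$ forces at least $c$ (not just one) distinct $\psi$-events at exactly $t+1$. To see this, play the first and last cluster members against each other: the first member already sees the $c-1$ intra-cluster forced $\psi$-events at timestamp $t$, so its $\neg\counting_{[0,1)}^k\psi$ budget leaves at most $k - c$ further $\psi$-events strictly after the last cluster member with timestamp in $[t, t+1)$; the last cluster member then needs at least $k - (k-c) = c$ $\psi$-events at exactly $t+1$ to satisfy $\counting_{[0,1]}^k\psi$. With this, the first member of the \emph{last} cluster in a window of length $<1$ sees, within $[0,1)$, the $c_r - 1$ intra-cluster events of its own cluster plus at least $\sum_{s<r} c_s$ boundary events (at the pairwise distinct timestamps $t_s + 1$), forcing $n - 1 \leq k - 1$, i.e.\ at most $k$ occurrences per half-open unit window and at most $k(a+1)$ in total --- slightly better than the stated bound, so your extra $+1$ for the endpoint $\tau_i + a$ is not needed. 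Note also that, once completed, this is a genuinely different route from the paper's, which instead maps each occurrence (except possibly the first) into the gap between two neighbouring minimal potential witnesses $\langle h_i, \ell_i\rangle$, $\langle h_{i+1}, \ell_{i+1}\rangle$ with $d(h_i, \ell_{i+1}) \geq 1$ and then invokes (a trivial modification of) Lemma~\ref{lem:psikgeq1occurrence}; your argument is self-contained and more elementary, at the price of the cluster analysis above.
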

\begin{proof}[Proof sketch.]
Each occurrence $j$ of $(\nextx_{>0} \top \vee \nextx_{\leq 0} \psi) \wedge \counting_{[0, 1]}^k \psi \wedge \neg \counting_{[0, 1)}^k \psi$, except for possibly the first one, happens `between' two neighbouring (minimal) potential witnesses $\langle h_i, \ell_i \rangle$ and $\langle h_{i+1}, \ell_{i+1} \rangle$ with $d(h_i, \ell_{i+1}) \geq 1$: either $j = h_i$ or $h_i < j < h_{i+1}$. The claim holds by (a trivial modification of) Lemma~\ref{lem:psikgeq1occurrence}. 
\end{proof}
Now suppose that $\rho, i \models \eventually_{(a, a+1)} \psi_1 \wedge
 \eventually_{(a-1, a)} \big((\nextx_{>0} \top \vee \nextx_{\leq 0} \psi) \wedge \counting_{[0, 1]}^k \psi \wedge \neg \counting_{[0, 1)}^k \psi\big)$
 and let $j$ be the maximal position in $\tau_i + (a-1, a)$ such that $\rho, j \models (\nextx_{>0} \top \vee \nextx_{\leq 0} \psi) \wedge \counting_{[0, 1]}^k \psi \wedge \neg \counting_{[0, 1)}^k \psi$.
The undesired scenario is when
there is a maximal $j' > j$ with $\tau_{j'} \in \tau_i + (a-1, a]$ such that $\rho, j' \models \psi$, and there are less than $k$ positions in $\tau_i + (a, a+1)$ satisfying $\psi$. In this case, it is clear that $\rho, j' \models \psi_2^{k, \geq 1}$. 
To rule this scenario out we employ the following strategy, which can be implemented as a $\tlc{}^\textsf{fut}$ formula (which we opt to explain in English, for the sake of readability; we count events at positions $> i$):
\begin{enumerate}[label={(\arabic*)}]
\item \label{item:undesired1} Count the number of occurrences of $\psi_2^{k, \geq 1}$ in $\tau_i + [0, a)$  and $\tau_i + [0, a]$.
   If they do not match, then we are in the undesired scenario. 
\item \label{item:undesired2} Count the number of occurrences of $(\nextx_{>0} \top \vee \nextx_{\leq 0} \psi) \wedge \counting_{[0, 1]}^k \psi \wedge \neg \counting_{[0, 1)}^k \psi$ in $\tau_i + [0, a)$.
\item \label{item:undesired3} Take a disjunction over all the possible ways in which these occurrences may interleave in $\tau_i + [0, a)$ (note that they may hold simultaneously on the same position). Those ending with $\psi_2^{k, \geq 1}$ but not $(\nextx_{>0} \top \vee \nextx_{\leq 0} \psi) \wedge \counting_{[0, 1]}^k \psi \wedge \neg \counting_{[0, 1)}^k \psi$
are in the undesired scenario.
\end{enumerate}
Let us call this formula (which captures the undesired scenario) $\varphi'_{\textit{out}}$.

\begin{proposition}
$\eventually_{(a, a+1)} \psi_1 \wedge
 \eventually_{(a-1, a)} \big((\nextx_{>0} \top \vee \nextx_{\leq 0} \psi) \wedge \counting_{[0, 1]}^k \psi \wedge \neg \counting_{[0, 1)}^k \psi\big) \wedge \neg \varphi'_{\textit{out}}$ implies $\rho, i \models \counting_{(a, a+1)}^k \psi$.
\end{proposition}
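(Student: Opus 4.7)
The plan is to proceed by contradiction: assume the left-hand side holds while $\counting_{(a, a+1)}^k \psi$ fails at $i$, and derive a violation of $\neg \varphi'_{\textit{out}}$. First, I would let $j^*$ be the \emph{maximal} position with $\tau_{j^*} \in \tau_i + (a-1, a)$ satisfying the shifted formula $(\nextx_{>0} \top \vee \nextx_{\leq 0} \psi) \wedge \counting_{[0, 1]}^k \psi \wedge \neg \counting_{[0, 1)}^k \psi$, whose existence is guaranteed by the second conjunct of the hypothesis. The shifted formula at $j^*$ then produces $k$ $\psi$-positions $p_1 < \cdots < p_{k-1} < p_k$ lying in $(\tau_{j^*}, \tau_{j^*} + 1]$ with $\tau_{p_k} = \tau_{j^*} + 1 \in \tau_i + (a, a+1)$, so the remaining task is to verify $\tau_{p_t} > \tau_i + a$ for each $t < k$, which would immediately witness $\counting_{(a, a+1)}^k \psi$ at $i$.

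For the contradiction, I would suppose that some $p_t$ with $t < k$ has $\tau_{p_t} \leq \tau_i + a$ and pick $j'$ to be the largest such, so $\tau_{j'} \in \tau_i + (a-1, a]$ and $\rho, j' \models \psi$. A direct count---using that $p_{t+1}, \dots, p_k$ all lie in $(\tau_i + a, \tau_{p_k}]$ and, by the failure of $\counting_{(a, a+1)}^k \psi$, that fewer than $k$ $\psi$-positions occupy $\tau_i + (a, a+1)$ in total---yields strictly fewer than $k$ $\psi$-positions in $(\tau_{j'}, \tau_{j'} + 1)$. Lemma~\ref{lem:countingcondition} (via condition~\ref{item:cond1}) would then supply a position of $\rho$ at distance exactly $1$ from $j'$, and combined with $j'$ itself satisfying $\psi$ this gives $\rho, j' \models \psi_2^{k, \geq 1}$.

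The final step is to derive the contradiction with $\neg \varphi'_{\textit{out}}$. Because $j^*$ is maximal, no position of $(\tau_{j^*}, \tau_i + a]$ satisfies the shifted formula, whereas $j' \in (\tau_{j^*}, \tau_i + a]$ satisfies $\psi_2^{k, \geq 1}$; splitting on whether $\tau_{j'} < \tau_i + a$ or $\tau_{j'} = \tau_i + a$, this triggers either step~\ref{item:undesired3} (the final entry in the chronological enumeration of the two kinds of occurrences inside $\tau_i + [0, a)$ is of the $\psi_2^{k, \geq 1}$ kind) or step~\ref{item:undesired1} (mismatched counts between $\tau_i + [0, a)$ and $\tau_i + [0, a]$), while step~\ref{item:undesired2} ensures the interleaving itself is correctly recorded. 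I expect the main obstacle to lie in step~\ref{item:undesired3}: because $\psi_2^{k, \geq 1}$ and the shifted formula may hold at the same position or at positions sharing a timestamp, conditions~\ref{item:cond1} and~\ref{item:cond2} need to be invoked carefully to ensure that the disjunction over interleavings does not accidentally accept an honest witness for $\counting_{(a, a+1)}^k \psi$, and that the bookkeeping is performed at the level of $\rho$'s discrete positions rather than raw timestamps.
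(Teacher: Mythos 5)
Your argument is correct and is essentially the paper's own proof run in the contrapositive direction: both hinge on the maximal position $j^*$ in $\tau_i+(a-1,a)$ satisfying the shifted formula, the $k$ pinned $\psi$-positions ending at exactly $\tau_{j^*}+1\in\tau_i+(a,a+1)$, and the observation that a straggler $j'$ at or before $\tau_i+a$ must satisfy $\psi_2^{k,\geq 1}$ and hence trigger clause~\ref{item:undesired1} or~\ref{item:undesired3} of $\varphi'_{\textit{out}}$. One small repair: your appeal to Lemma~\ref{lem:countingcondition} and condition~\ref{item:cond1} to produce a position at distance exactly $1$ from $j'$ is mis-aimed and unnecessary---$\rho,j'\models\psi_2^{k,\geq 1}$ already follows from your count (fewer than $k$ $\psi$-positions strictly after $j'$ within distance $<1$) together with the existence of at least $k$ $\psi$-positions after $j'$, which is supplied by the conjunct $\eventually_{(a,a+1)}\psi_1$ rather than by~\ref{item:cond1}.
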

\begin{proof}
Consider the conditions above that form $\varphi'_{\textit{out}}$.
First note that if the number of occurrences of $\psi_2^{k, \geq 1}$ in $\tau_i + [0, a]$ is $0$, then it is easy to see that $\rho, i \models \counting_{(a, a+1)}^k \psi$.
To see~\ref{item:undesired1}, note that  
if $\psi_2^{k, \geq 1}$ holds at some position $j'$ with $\tau_{j'} = \tau_i + a$, then $\tau_i + (a, a +1)$ may contain at most $k - 1$ positions satisfying $\psi$.
So for~\ref{item:undesired3}, first
assume that $\psi_2^{k, \geq 1}$ does
not hold at any position at $\tau_i + a$.
Let $j$ be the maximal position with $\tau_j \in \tau_i + (a-1, a)$ such that $\rho, j \models (\nextx_{>0} \top \vee \nextx_{\leq 0} \psi) \wedge \counting_{[0, 1]}^k \psi \wedge \neg \counting_{[0, 1)}^k \psi$
and $j' > i$ be the maximal position with
$\tau_{j'} \in \tau_i + [0, a)$ and $\rho, j' \models \psi_2^{k, \geq 1}$.

If $\psi$ holds at some maximal position $\ell$ at $\tau_i + a$, since $\rho, i \models \eventually_{(a, a+1)} \psi_1$, we have
$\rho, \ell \models \psi_2^{k, < 1}$ (defined in the expected way) and thus $\rho, i \models \counting_{(a, a+1)}^k \psi$;
we argue that 
$\psi_2^{k, \geq 1}$ cannot hold at any $\ell'$ where $j < \ell' < \ell$. Suppose to the contrary that $\rho, \ell' \models \psi_2^{k, \geq 1}$
(Wlog. let $\ell'$ be the largest such position
at the same timestamp $\tau_{\ell'}$).
If $\rho, \ell' \models \psi_2^{k, = 1}$, we have  
$\rho, \ell' \models (\nextx_{>0} \top \vee \nextx_{\leq 0} \psi) \wedge \counting_{[0, 1]}^k \psi \wedge \neg \counting_{[0, 1)}^k \psi$, contradicting the maximality of $j$.
If $\rho, \ell' \models \psi_2^{k, > 1}$ then
by Lemma~\ref{lem:countingcondition},
there exists a position $\ell'' > \ell'$ such that $\rho, \ell'' \models (\nextx_{>0} \top \vee \nextx_{\leq 0} \psi) \wedge \counting_{[0, 1]}^k \psi \wedge \neg \counting_{[0, 1)}^k \psi$, again contradicting the maximality of $j$.
Now we assume that $\psi$ does not hold at any position at $\tau_i + a$.
Consider the following cases:
\begin{itemize}
    \item $j' = j$: We know that $\rho, j \models \psi$. Consider the following subcases:
    \begin{itemize}
        \item $j'$ is not the maximal position with $\tau_{j'} \in \tau_i + (a-1, a)$ such that $\rho, j' \models \psi$: There is a maximal $j''$ with $\tau_{j''} \in \tau_i + (a-1, a)$ and $\rho, j'' \models \psi$. Since $\rho, i \models \eventually_{(a, a+1)} \psi_1$
        and thus $\rho, j'' \models \psi_2^{k, < 1}$, it follows that $\rho, i \models \counting_{(a, a+1)}^k \psi$.
        \item  $j'$ is the maximal position with $\tau_{j'} \in \tau_i + (a-1, a)$ such that $\rho, j' \models \psi$: Since $\rho, j' \models \counting_{[0, 1]}^k$ but there is no $j'' > j$ with $\tau_{j''} \in \tau_i + (a-1, a]$ such that $\rho, j'' \models \psi$, we have $\rho, i \models \counting_{(a, a+1)}^k \psi$.
    \end{itemize}
    \item $j' < j$: If there is a maximal $j'' > j$ with $\tau_{j''} \in \tau_i + (a-1, a)$ and $\rho, j'' \models \psi$,
    Since $\rho, i \models \eventually_{(a, a+1)} \psi_1$
        we have $\rho, j'' \models \psi_2^{k, < 1}$, and it follows that $\rho, i \models \counting_{(a, a+1)}^k \psi$.
        If there is no such $j''$, since $\rho, j \models \counting_{[0, 1]}^k$ we also have $\rho, i \models \counting_{(a, a+1)}^k \psi$.
    \qedhere
\end{itemize}
\end{proof}
Our final formula for $\counting_{(a, a+1)}^k \psi$ is 
\begin{IEEEeqnarray*}{rCll}
\varphi'_{\textit{wit}} & = & 
\eventually_{(a, a+1)} \psi_1 \wedge
 \bigg( & \Big( \eventually_{(a-1, a)} \big((\nextx_{>0} \top \vee \nextx_{\leq 0} \psi) \wedge \counting_{[0, 1]}^k \psi \wedge \neg \counting_{[0, 1)}^k \psi\big) \wedge \neg \varphi'_{\textit{out}} \Big) \\
 & & & {} \vee \big( \eventually_{(a-1, a]} \psi \wedge \globally_{(a-1, a]} (\psi \implies \counting_{[0, 1)}^k \psi) \big) \bigg)  \;.
\end{IEEEeqnarray*}
To see that $\counting_{(a, a+1)}^k \psi$ implies $\varphi_{\textit{wit}}'$, observe that Propositions~\ref{prop:counttowit1} and~\ref{prop:counttowit2} still hold if 
the conjunct $\neg \varphi_{\textit{wit}}'$ is added.
We apply the equivalence repeatedly from the innermost subformula $\counting_I^k \psi$ where $\psi$ is in $\tlc{}^\textsf{fut}$, and then work outwards
until there is no $\counting_I^k \psi$ with $I = \langle a, b \rangle$. In the process, we also need to ensure that~\ref{item:cond1} and ~\ref{item:cond2} are satisfied for various $\psi$.
Finally, we rewrite $\eventually_I$ with $I = \langle a, b \rangle$ into $\eventually_I$ with $I = [0, b \rangle$.
\begin{theorem}
Given a $\tlci{}^\textsf{fut}$ formula $\varphi$,
there is a $\tlc{}^\textsf{fut}$ formula $\varphi'$
such that $\varphi$ and $\varphi'$ are equivalent over timed words satisfying~\emph{\ref{item:cond1}} and ~\emph{\ref{item:cond2}} (for some finite set of $\psi$). 
\end{theorem}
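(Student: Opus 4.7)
The plan is to proceed by structural induction on the $\tlci{}^\textsf{fut}$ formula $\varphi$, rewriting it from the inside out. Concretely, let $\counting_I^k \psi$ be an innermost subformula with $I = \langle a, b \rangle$ and $a \geq 1$, so that $\psi$ is already a $\tlc{}^\textsf{fut}$ formula (i.e.\ all its counting modalities use unilateral intervals). I would apply the equivalence $\counting_{(a, a+1)}^k \psi \equiv \varphi'_{\textit{wit}}$ established earlier in this section, substitute $\varphi'_{\textit{wit}}$ for this occurrence, and iterate outwards. Propositional and $\mitl{}^\textsf{fut}$ constructs are preserved verbatim; the resulting formula $\varphi'$ lies in $\tlc{}^\textsf{fut}$ because every gadget appearing in $\varphi'_{\textit{wit}}$ (the formulae $\psi_1$, $\psi_2^{k, \geq 1}$, $\varphi'_{\textit{out}}$, and the members of $\Psi^m$) is explicitly in $\tlc{}^\textsf{fut}$.

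Before this main induction can close, I need two small generalisations of the construction presented only for $(a, a+1)$. First, for general $\langle a, b \rangle$ with $b - a \geq 1$, I observe that the interval can be split into a constant (finite) number of sub-intervals of length at most $1$ of shapes $[a, a+1)$, $(a, a+1)$, $(a, a+1]$, $[a, a+1]$, together with a trailing or leading piece handled by a routine adjustment of $\varphi'_{\textit{wit}}$ (replacing the boundary endpoints in $\eventually_{(a-1, a)}$, $\eventually_{(a-1, a]}$, $\globally_{(a-1, a]}$ accordingly); the counting statement $\counting_{\langle a, b \rangle}^k \psi$ is then a disjunction over how the $k$ witnesses distribute across these sub-intervals, exactly as in the rational-constants argument sketched after Theorem~\ref{thm:rational}. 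Second, for $k = 1$ (i.e.\ $\eventually_{\langle a, b \rangle} \psi$) I would reuse the simpler development for `eventually' given at the start of Section~\ref{sec:unilateral} together with the family $\Phi^m$ and the precondition $\vartheta^{\eventually}$.

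The main obstacle is the bookkeeping of the preconditions~\ref{item:cond1} and~\ref{item:cond2}: each rewriting of a subformula $\counting_I^k \psi$ introduces its own family $\Psi^m(\psi)$ over which the conditions must hold on the input timed word. The plan here is to process subformulae bottom-up so that at each step $\psi$ is already fixed; for every such $\psi$ I add to a running finite set $S$ all of the formulae $\Psi^m(\psi)$ (for $m \in \{1, \dots, a - 1\}$) plus $\psi$ itself. Because the formula $\varphi$ has only finitely many counting subformulae and each associated $\Psi^m(\psi)$ is finite, $S$ remains finite at the end, giving the `finite set of $\psi$' in the statement. The conjoined precondition $\vartheta^{\eventually}$ (suitably extended to $\varphi^{\counting}$ from Lemma~\ref{lem:countingcondition}) ranging over $S$ then enforces both~\ref{item:cond1} and~\ref{item:cond2}.

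Finally, after all counting modalities with intervals of the form $\langle a, b \rangle$ (with $a \geq 1$) have been eliminated, only $\eventually_I$ with $I = \langle a, b \rangle$ may remain as a residual from the Boolean combinators; these are handled by one more pass applying the unilateral rewriting of $\eventually$ sketched earlier. Correctness follows from the propositions already proved (in particular the two directions of implication between $\counting_{(a, a+1)}^k \psi$ and $\varphi'_{\textit{wit}}$), together with an easy compositionality argument that substituting equivalent subformulae preserves the semantics of the enclosing $\tlc{}^\textsf{fut}$ context.
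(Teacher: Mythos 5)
Your proposal follows essentially the same route as the paper's own argument: rewrite innermost $\counting_I^k\psi$ subformulae via the equivalence with $\varphi'_{\textit{wit}}$, work outwards while accumulating the finitely many $\psi$ for which~\ref{item:cond1} and~\ref{item:cond2} must hold, and finish with the unilateral rewriting of $\eventually_I$. The extra details you supply (splitting general $\langle a, b\rangle$ into unit-length pieces, the explicit bookkeeping of the precondition set) are consistent elaborations of what the paper leaves implicit.
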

Finally, this result carries over to the case of the continuous interpretations of \tlci{}, as the `positions'
postulated by~\ref{item:cond1} and~\ref{item:cond2} automatically exist.
\begin{corollary}
$\tlci{}^\textsf{\textup{fut}} \subseteq \tlc{}^\textsf{\textup{fut}}$ in the continuous semantics.    
\end{corollary}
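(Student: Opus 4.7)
The plan is to reduce the corollary to the theorem just proved, by observing that the side conditions~\ref{item:cond1} and~\ref{item:cond2} hold vacuously in the continuous semantics. Recall that under the continuous interpretation the universe is $\R_{\geq 0}$ itself, and every non-negative real is a legitimate evaluation point. In particular, for any position $j$ and any $t$ with $j - t \geq 0$, the point $j - t$ is already a position of the model. Consequently the ``witness at distance exactly $1$'' demanded by both~\ref{item:cond1} and~\ref{item:cond2} is automatically present, so both conditions reduce to true statements for every (finitely variable) signal.

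With that observation in hand, I would first take the translation $\varphi'$ produced by the previous theorem and read it over the continuous semantics unchanged. All the auxiliary subformulas used in the reduction --- $\psi_1$, $\psi_2^{k,\geq 1}$, the families $\Phi^m$ and $\Psi^m$, and their combinations --- are $\tlc{}^\textsf{fut}$ or $\ltl{}^\textsf{fut}$ formulas, so their intended meanings transfer with no syntactic change. The preconditions $\varphi^{\counting}$ and $\vartheta^{\eventually}$ that were conjoined with $\varphi'$ in the pointwise case to enforce~\ref{item:cond1} and~\ref{item:cond2} can therefore simply be dropped, or equivalently retained as tautologies, in the continuous translation.

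The final step is to check that the correctness propositions and the occurrence-counting lemma (Lemma~\ref{lem:psikgeq1occurrence} and its companion) still hold in the continuous semantics. Their arguments rely only on the minimality of potential witnesses and on the $\geq 1$ spacing between consecutive ones --- neither of which uses the discreteness of the position set --- so they transport directly. The main conceptual point, which I expect to be the only real obstacle, is being precise about the bijection-like correspondence between ``positions'' in the two semantics and verifying that $\counting_I^k$ keeps the meaning given by $\vartheta^{\counting,k}_I(x,X)$ when evaluated over signals; once that is pinned down, the corollary follows immediately by applying the theorem to $\varphi$ and reading the resulting $\varphi'$ under the continuous semantics.
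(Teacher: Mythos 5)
Your proposal is correct and follows essentially the same route as the paper: the corollary is obtained by observing that the positions at distance exactly $1$ demanded by~\ref{item:cond1} and~\ref{item:cond2} exist automatically when the universe is $\R_{\geq 0}$, so the conditional equivalence of the preceding theorem becomes unconditional. Your additional remarks about transporting the auxiliary lemmas and the position correspondence are sensible but not developed beyond what the paper itself leaves implicit.
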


\section{Conclusion and future work}
It turned out that allowing $\langle a, b \rangle$ in counting modalities 
only makes them more intricate to express in (aperiodic) automata modalities (or \qtwomlo{}), which necessarily `start' from the current point; in other words, the relevant claims in~\cite{Rabinovich2010} are indeed correct. 
More generally, we have shown that the existence of two `witnesses' $x' \leq x''$ 
for a first-order formula $\varphi(x', x'')$
in $t_0 + \langle a, b \rangle$ 
can also be captured in aperiodic $\emitl{}^\textsf{fut}$ (or $\qtwomlo{}^\textsf{fut}$).
This is somewhat surprising, as the timing constraints on both $x'$ and $x''$
does seem to require the use of 
punctualities or non-trivial  extensions.
Our second main result gives a satisfactory correction
to the folklore belief, at least in the case of continuous semantics. We list below some possible further directions:

\begin{itemize}


\item \mitl{} with both the future and past modalities and rational constants appears to be very expressive with $\mathrm{EXPSPACE}$-complete satisfiability and model-checking problems (through a simple scaling argument). We also know from Theorem~\ref{thm:rational} and~\cite{Hunter2013} that it can be made expressively complete for \foone{} by adding punctualities in the continuous semantics. Can it express some decidable fragments of \onetptl{}~\cite{KrishnaMMP22} with rational constants (i.e.~without using automata modalities)?

\item The properties considered in Section~\ref{sec:substring} can be seen as a special case of a decidable fragment of the logic \pnemtl{} recently proposed in~\cite{KrishnaMMP22}. Can we extend the ideas presented here to handle more general \pnemtl{} properties, where automata modalities do not start from the current point? 

\item Can the construction in Section~\ref{sec:unilateral}
lead to a future (or `almost future'~\cite{590a71927dcb4ba78bc7b58a4d9bfbe8}) metric temporal logic that is expressively complete for $\qtwomlo{}^\textsf{fut}$, or more generally a separation result akin to~\cite{Gabbay1980} or~\cite{HunterOW13}?

\end{itemize}

\bibliographystyle{eptcs}
\bibliography{refs}

\begin{thebibliography}{10}
\providecommand{\bibitemdeclare}[2]{}
\providecommand{\surnamestart}{}
\providecommand{\surnameend}{}
\providecommand{\urlprefix}{Available at }
\providecommand{\url}[1]{\texttt{#1}}
\providecommand{\href}[2]{\texttt{#2}}
\providecommand{\urlalt}[2]{\href{#1}{#2}}
\providecommand{\doi}[1]{doi:\urlalt{https://doi.org/#1}{#1}}
\providecommand{\eprint}[1]{arXiv:\urlalt{https://arxiv.org/abs/#1}{#1}}
\providecommand{\bibinfo}[2]{#2}

\bibitemdeclare{article}{AluDil94}
\bibitem{AluDil94}
\bibinfo{author}{Rajeev \surnamestart Alur\surnameend} \&
  \bibinfo{author}{David~L. \surnamestart Dill\surnameend}
  (\bibinfo{year}{1994}): \emph{\bibinfo{title}{A Theory of Timed Automata}}.
\newblock {\slshape \bibinfo{journal}{Theoretical Computer Science}}
  \bibinfo{volume}{126}(\bibinfo{number}{2}), pp. \bibinfo{pages}{183--235},
  \doi{10.1016/0304-3975(94)90010-8}.

\bibitemdeclare{article}{AluFed96}
\bibitem{AluFed96}
\bibinfo{author}{Rajeev \surnamestart Alur\surnameend},
  \bibinfo{author}{Tom{\'a}s \surnamestart Feder\surnameend} \&
  \bibinfo{author}{Thomas~A. \surnamestart Henzinger\surnameend}
  (\bibinfo{year}{1996}): \emph{\bibinfo{title}{The Benefits of Relaxing
  Punctuality}}.
\newblock {\slshape \bibinfo{journal}{Journal of the ACM}}
  \bibinfo{volume}{43}(\bibinfo{number}{1}), pp. \bibinfo{pages}{116--146},
  \doi{10.1145/227595.227602}.

\bibitemdeclare{inproceedings}{AlurH92}
\bibitem{AlurH92}
\bibinfo{author}{Rajeev \surnamestart Alur\surnameend} \&
  \bibinfo{author}{Thomas~A. \surnamestart Henzinger\surnameend}
  (\bibinfo{year}{1992}): \emph{\bibinfo{title}{Back to the Future: Towards a
  Theory of Timed Regular Languages}}.
\newblock In: {\slshape \bibinfo{booktitle}{FOCS}}, \bibinfo{publisher}{IEEE
  Computer Society}, pp. \bibinfo{pages}{177--186},
  \doi{10.1109/SFCS.1992.267774}.

\bibitemdeclare{article}{Alur1993}
\bibitem{Alur1993}
\bibinfo{author}{Rajeev \surnamestart Alur\surnameend} \&
  \bibinfo{author}{Thomas~A. \surnamestart Henzinger\surnameend}
  (\bibinfo{year}{1993}): \emph{\bibinfo{title}{Real-Time Logics: Complexity
  and Expressiveness}}.
\newblock {\slshape \bibinfo{journal}{Information and Computation}}
  \bibinfo{volume}{104}(\bibinfo{number}{1}), pp. \bibinfo{pages}{35--77},
  \doi{10.1006/inco.1993.1025}.

\bibitemdeclare{article}{AluHen94}
\bibitem{AluHen94}
\bibinfo{author}{Rajeev \surnamestart Alur\surnameend} \&
  \bibinfo{author}{Thomas~A. \surnamestart Henzinger\surnameend}
  (\bibinfo{year}{1994}): \emph{\bibinfo{title}{A Really Temporal Logic}}.
\newblock {\slshape \bibinfo{journal}{Journal of the ACM}}
  \bibinfo{volume}{41}(\bibinfo{number}{1}), pp. \bibinfo{pages}{164--169},
  \doi{10.1145/174644.174651}.

\bibitemdeclare{inproceedings}{782629}
\bibitem{782629}
\bibinfo{author}{A.~\surnamestart Baziramwabo\surnameend},
  \bibinfo{author}{P.~\surnamestart McKenzie\surnameend} \&
  \bibinfo{author}{D.~\surnamestart Therien\surnameend} (\bibinfo{year}{1999}):
  \emph{\bibinfo{title}{Modular temporal logic}}.
\newblock In: {\slshape \bibinfo{booktitle}{LICS}}, \bibinfo{publisher}{{IEEE}
  Computer Society}, pp. \bibinfo{pages}{344--351},
  \doi{10.1109/LICS.1999.782629}.

\bibitemdeclare{inproceedings}{conf/fsttcs/BednarczykC17}
\bibitem{conf/fsttcs/BednarczykC17}
\bibinfo{author}{Bartosz \surnamestart Bednarczyk\surnameend} \&
  \bibinfo{author}{Witold \surnamestart Charatonik\surnameend}
  (\bibinfo{year}{2017}): \emph{\bibinfo{title}{Modulo Counting on Words and
  Trees}}.
\newblock In: {\slshape \bibinfo{booktitle}{FSTTCS}}, {\slshape
  \bibinfo{series}{LIPIcs}}~\bibinfo{volume}{93}, \bibinfo{publisher}{Schloss
  Dagstuhl - Leibniz-Zentrum f{\"{u}}r Informatik}, pp.
  \bibinfo{pages}{12:1--12:16}, \doi{10.4230/LIPIcs.FSTTCS.2017.12}.

\bibitemdeclare{inproceedings}{Bellomarini2021MonotonicAF}
\bibitem{Bellomarini2021MonotonicAF}
\bibinfo{author}{Luigi \surnamestart Bellomarini\surnameend},
  \bibinfo{author}{Markus \surnamestart Nissl\surnameend} \&
  \bibinfo{author}{Emanuel \surnamestart Sallinger\surnameend}
  (\bibinfo{year}{2021}): \emph{\bibinfo{title}{Monotonic Aggregation for
  Temporal Datalog}}.
\newblock In: {\slshape \bibinfo{booktitle}{RuleML+RR}}, {\slshape
  \bibinfo{series}{{CEUR} Workshop Proceedings}} \bibinfo{volume}{2956},
  \bibinfo{publisher}{CEUR-WS.org}.

\bibitemdeclare{article}{Bouyer2010}
\bibitem{Bouyer2010}
\bibinfo{author}{Patricia \surnamestart Bouyer\surnameend},
  \bibinfo{author}{Fabrice \surnamestart Chevalier\surnameend} \&
  \bibinfo{author}{Nicolas \surnamestart Markey\surnameend}
  (\bibinfo{year}{2010}): \emph{\bibinfo{title}{{On the expressiveness of TPTL
  and MTL}}}.
\newblock {\slshape \bibinfo{journal}{Information and Computation}}
  \bibinfo{volume}{208}(\bibinfo{number}{2}), pp. \bibinfo{pages}{97--116},
  \doi{10.1016/J.IC.2009.10.004}.

\bibitemdeclare{inproceedings}{BouyerLMOW17}
\bibitem{BouyerLMOW17}
\bibinfo{author}{Patricia \surnamestart Bouyer\surnameend},
  \bibinfo{author}{Fran{\c c}ois \surnamestart Laroussinie\surnameend},
  \bibinfo{author}{Nicolas \surnamestart Markey\surnameend},
  \bibinfo{author}{Jo{\"e}l \surnamestart Ouaknine\surnameend} \&
  \bibinfo{author}{James \surnamestart Worrell\surnameend}
  (\bibinfo{year}{2017}): \emph{\bibinfo{title}{Timed Temporal Logics}}.
\newblock In: {\slshape \bibinfo{booktitle}{Models, Algorithms, Logics and
  Tools - Essays Dedicated to Kim Guldstrand Larsen on the Occasion of His 60th
  Birthday}}, {\slshape \bibinfo{series}{LNCS}} \bibinfo{volume}{10460},
  \bibinfo{publisher}{Springer}, pp. \bibinfo{pages}{211--230},
  \doi{10.1007/978-3-319-63121-9\_11}.

\bibitemdeclare{article}{DSouza2007}
\bibitem{DSouza2007}
\bibinfo{author}{Deepak \surnamestart D{'}Souza\surnameend} \&
  \bibinfo{author}{Pavithra \surnamestart Prabhakar\surnameend}
  (\bibinfo{year}{2007}): \emph{\bibinfo{title}{On the expressiveness of MTL in
  the pointwise and continuous semantics}}.
\newblock {\slshape \bibinfo{journal}{International Journal on Software Tools
  for Technology}} \bibinfo{volume}{9}(\bibinfo{number}{1}), pp.
  \bibinfo{pages}{1--4}, \doi{10.1007/S10009-005-0214-9}.

\bibitemdeclare{techreport}{FR08-TR2008-10}
\bibitem{FR08-TR2008-10}
\bibinfo{author}{Carlo~A. \surnamestart Furia\surnameend} \&
  \bibinfo{author}{Matteo \surnamestart Rossi\surnameend}
  (\bibinfo{year}{2008}): \emph{\bibinfo{title}{{MTL} with Bounded Variability:
  Decidability and Complexity}}.
\newblock \bibinfo{type}{Technical Report} \bibinfo{number}{2008.10},
  \bibinfo{institution}{Dipartimento di Elettronica e Informazione, Politecnico
  di Milano}, \doi{10.1007/978-3-540-85778-5\_9}.

\bibitemdeclare{inproceedings}{Gabbay1980}
\bibitem{Gabbay1980}
\bibinfo{author}{Dov \surnamestart Gabbay\surnameend}, \bibinfo{author}{Amir
  \surnamestart Pnueli\surnameend}, \bibinfo{author}{Sharanon \surnamestart
  Shelah\surnameend} \& \bibinfo{author}{J.~\surnamestart Stavi\surnameend}
  (\bibinfo{year}{1980}): \emph{\bibinfo{title}{On the Temporal Analysis of
  Fairness}}.
\newblock In: {\slshape \bibinfo{booktitle}{POPL}}, \bibinfo{publisher}{ACM
  Press}, pp. \bibinfo{pages}{163--173}, \doi{10.1145/567446.567462}.

\bibitemdeclare{inproceedings}{HenRas98}
\bibitem{HenRas98}
\bibinfo{author}{Thomas~A. \surnamestart Henzinger\surnameend},
  \bibinfo{author}{Jean-Fran{\c c}ois \surnamestart Raskin\surnameend} \&
  \bibinfo{author}{Pierre-Yves \surnamestart Schobbens\surnameend}
  (\bibinfo{year}{1998}): \emph{\bibinfo{title}{The Regular Real-Time
  Languages}}.
\newblock In: {\slshape \bibinfo{booktitle}{ICALP}}, {\slshape
  \bibinfo{series}{LNCS}} \bibinfo{volume}{1443},
  \bibinfo{publisher}{Springer}, pp. \bibinfo{pages}{580--591},
  \doi{10.1007/BFB0055086}.

\bibitemdeclare{inproceedings}{HirRab99}
\bibitem{HirRab99}
\bibinfo{author}{Yoram \surnamestart Hirshfeld\surnameend} \&
  \bibinfo{author}{Alexander \surnamestart Rabinovich\surnameend}
  (\bibinfo{year}{1999}): \emph{\bibinfo{title}{A framework for decidable
  metrical logics}}.
\newblock In: {\slshape \bibinfo{booktitle}{ICALP}}, {\slshape
  \bibinfo{series}{LNCS}} \bibinfo{volume}{1644},
  \bibinfo{publisher}{Springer}, pp. \bibinfo{pages}{422--432},
  \doi{10.1007/3-540-48523-6\_39}.

\bibitemdeclare{inproceedings}{HirshfeldR99}
\bibitem{HirshfeldR99}
\bibinfo{author}{Yoram \surnamestart Hirshfeld\surnameend} \&
  \bibinfo{author}{Alexander \surnamestart Rabinovich\surnameend}
  (\bibinfo{year}{1999}): \emph{\bibinfo{title}{Quantitative Temporal Logic}}.
\newblock In: {\slshape \bibinfo{booktitle}{CSL}}, {\slshape
  \bibinfo{series}{LNCS}} \bibinfo{volume}{1683},
  \bibinfo{publisher}{Springer}, pp. \bibinfo{pages}{172--187},
  \doi{10.1007/3-540-48168-0\_13}.

\bibitemdeclare{article}{Hirshfeld2004}
\bibitem{Hirshfeld2004}
\bibinfo{author}{Yoram \surnamestart Hirshfeld\surnameend} \&
  \bibinfo{author}{Alexander \surnamestart Rabinovich\surnameend}
  (\bibinfo{year}{2004}): \emph{\bibinfo{title}{Logics for Real Time:
  Decidability and Complexity}}.
\newblock {\slshape \bibinfo{journal}{Fundamenta Informaticae}}
  \bibinfo{volume}{62}(\bibinfo{number}{1}), pp. \bibinfo{pages}{1--28}.

\bibitemdeclare{inproceedings}{Hirshfeld06}
\bibitem{Hirshfeld06}
\bibinfo{author}{Yoram \surnamestart Hirshfeld\surnameend} \&
  \bibinfo{author}{Alexander \surnamestart Rabinovich\surnameend}
  (\bibinfo{year}{2006}): \emph{\bibinfo{title}{An Expressive Temporal Logic
  for Real Time}}.
\newblock In: {\slshape \bibinfo{booktitle}{MFCS}}, {\slshape
  \bibinfo{series}{LNCS}} \bibinfo{volume}{4162},
  \bibinfo{publisher}{Springer}, pp. \bibinfo{pages}{492--504},
  \doi{10.1007/11821069\_43}.

\bibitemdeclare{article}{hirshfeld2008decidable}
\bibitem{hirshfeld2008decidable}
\bibinfo{author}{Yoram \surnamestart Hirshfeld\surnameend} \&
  \bibinfo{author}{Alexander \surnamestart Rabinovich\surnameend}
  (\bibinfo{year}{2008}): \emph{\bibinfo{title}{Decidable metric logics}}.
\newblock {\slshape \bibinfo{journal}{Information and Computation}}
  \bibinfo{volume}{206}(\bibinfo{number}{12}), pp. \bibinfo{pages}{1425--1442},
  \doi{10.1016/J.IC.2008.08.004}.

\bibitemdeclare{article}{HirshfeldR12}
\bibitem{HirshfeldR12}
\bibinfo{author}{Yoram \surnamestart Hirshfeld\surnameend} \&
  \bibinfo{author}{Alexander \surnamestart Rabinovich\surnameend}
  (\bibinfo{year}{2012}): \emph{\bibinfo{title}{Continuous time temporal logic
  with counting}}.
\newblock {\slshape \bibinfo{journal}{Information and Computation}}
  \bibinfo{volume}{214}, pp. \bibinfo{pages}{1--9},
  \doi{10.1016/J.IC.2011.11.003}.

\bibitemdeclare{inproceedings}{Ho19}
\bibitem{Ho19}
\bibinfo{author}{Hsi-Ming \surnamestart Ho\surnameend} (\bibinfo{year}{2019}):
  \emph{\bibinfo{title}{Revisiting timed logics with automata modalities}}.
\newblock In: {\slshape \bibinfo{booktitle}{HSCC}}, \bibinfo{publisher}{ACM},
  pp. \bibinfo{pages}{67--76}, \doi{10.1145/3302504.3311818}.

\bibitemdeclare{inproceedings}{HoMadnani23}
\bibitem{HoMadnani23}
\bibinfo{author}{Hsi-Ming \surnamestart Ho\surnameend} \&
  \bibinfo{author}{Khushraj \surnamestart Madnani\surnameend}
  (\bibinfo{year}{2023}): \emph{\bibinfo{title}{{More Than 0s and 1s: Metric
  Quantifiers and Counting over Timed Words}}}.
\newblock In: {\slshape \bibinfo{booktitle}{TIME}}, {\slshape
  \bibinfo{series}{LIPIcs}} \bibinfo{volume}{278}, \bibinfo{publisher}{Schloss
  Dagstuhl -- Leibniz-Zentrum f{\"u}r Informatik}, pp.
  \bibinfo{pages}{7:1--7:15}, \doi{10.4230/LIPICS.TIME.2023.7}.

\bibitemdeclare{inproceedings}{Hunter2013}
\bibitem{Hunter2013}
\bibinfo{author}{Paul \surnamestart Hunter\surnameend} (\bibinfo{year}{2013}):
  \emph{\bibinfo{title}{When is Metric Temporal Logic Expressively Complete?}}
\newblock In: {\slshape \bibinfo{booktitle}{CSL}}, {\slshape
  \bibinfo{series}{LIPIcs}}~\bibinfo{volume}{23}, \bibinfo{publisher}{Schloss
  Dagstuhl - Leibniz-Zentrum fuer Informatik}, pp. \bibinfo{pages}{380--394},
  \doi{10.4230/LIPICS.CSL.2013.380}.

\bibitemdeclare{inproceedings}{HunterOW13}
\bibitem{HunterOW13}
\bibinfo{author}{Paul \surnamestart Hunter\surnameend},
  \bibinfo{author}{Jo{\"e}l \surnamestart Ouaknine\surnameend} \&
  \bibinfo{author}{James \surnamestart Worrell\surnameend}
  (\bibinfo{year}{2013}): \emph{\bibinfo{title}{Expressive Completeness for
  Metric Temporal Logic}}.
\newblock In: {\slshape \bibinfo{booktitle}{LICS}}, \bibinfo{publisher}{IEEE
  Computer Society}, pp. \bibinfo{pages}{349--357}, \doi{10.1109/LICS.2013.41}.

\bibitemdeclare{article}{Koy90}
\bibitem{Koy90}
\bibinfo{author}{Ron \surnamestart Koymans\surnameend} (\bibinfo{year}{1990}):
  \emph{\bibinfo{title}{Specifying Real-time Properties with Metric Temporal
  Logic}}.
\newblock {\slshape \bibinfo{journal}{Real-Time Systems}}
  \bibinfo{volume}{2}(\bibinfo{number}{4}), pp. \bibinfo{pages}{255--299},
  \doi{10.1007/BF01995674}.

\bibitemdeclare{article}{KrishnaMMP22}
\bibitem{KrishnaMMP22}
\bibinfo{author}{Shankara~Narayanan \surnamestart Krishna\surnameend},
  \bibinfo{author}{Khushraj \surnamestart Madnani\surnameend},
  \bibinfo{author}{Manuel~Mazo \surnamestart Jr.\surnameend} \&
  \bibinfo{author}{Paritosh \surnamestart Pandya\surnameend}
  (\bibinfo{year}{2022}): \emph{\bibinfo{title}{From Non-Punctuality to
  Non-Adjacency: A Quest for Decidability of Timed Temporal Logics with
  Quantifiers.}}
\newblock {\slshape \bibinfo{journal}{Formal Aspects of Computing}},
  \doi{10.1145/3571749}.

\bibitemdeclare{inproceedings}{KriMad16}
\bibitem{KriMad16}
\bibinfo{author}{Shankara~Narayanan \surnamestart Krishna\surnameend},
  \bibinfo{author}{Khushraj \surnamestart Madnani\surnameend} \&
  \bibinfo{author}{Paritosh~K. \surnamestart Pandya\surnameend}
  (\bibinfo{year}{2016}): \emph{\bibinfo{title}{Metric Temporal Logic with
  Counting}}.
\newblock In: {\slshape \bibinfo{booktitle}{FoSSaCS}}, {\slshape
  \bibinfo{series}{LNCS}} \bibinfo{volume}{9634},
  \bibinfo{publisher}{Springer}, pp. \bibinfo{pages}{335--352},
  \doi{10.1007/978-3-662-49630-5\_20}.

\bibitemdeclare{inproceedings}{KrishnaMP18}
\bibitem{KrishnaMP18}
\bibinfo{author}{Shankara~Narayanan \surnamestart Krishna\surnameend},
  \bibinfo{author}{Khushraj \surnamestart Madnani\surnameend} \&
  \bibinfo{author}{Paritosh~K. \surnamestart Pandya\surnameend}
  (\bibinfo{year}{2018}): \emph{\bibinfo{title}{Logics Meet 1-Clock Alternating
  Timed Automata}}.
\newblock In: {\slshape \bibinfo{booktitle}{CONCUR}}, {\slshape
  \bibinfo{series}{LIPIcs}} \bibinfo{volume}{118}, \bibinfo{publisher}{Schloss
  Dagstuhl - Leibniz-Zentrum f{\"u}r Informatik}, pp.
  \bibinfo{pages}{39:1--39:17}, \doi{10.4230/LIPICS.CONCUR.2018.39}.

\bibitemdeclare{inproceedings}{Laroussinie2010}
\bibitem{Laroussinie2010}
\bibinfo{author}{Fran{\c c}ois \surnamestart Laroussinie\surnameend},
  \bibinfo{author}{Antoine \surnamestart Meyer\surnameend} \&
  \bibinfo{author}{Eudes \surnamestart Petonnet\surnameend}
  (\bibinfo{year}{2010}): \emph{\bibinfo{title}{Counting {LTL}}}.
\newblock In: {\slshape \bibinfo{booktitle}{TIME}}, \bibinfo{publisher}{{IEEE}
  Computer Society}, pp. \bibinfo{pages}{51--58}, \doi{10.1109/TIME.2010.20}.

\bibitemdeclare{inproceedings}{LodayaS10}
\bibitem{LodayaS10}
\bibinfo{author}{Kamal \surnamestart Lodaya\surnameend} \&
  \bibinfo{author}{A.~V. \surnamestart Sreejith\surnameend}
  (\bibinfo{year}{2010}): \emph{\bibinfo{title}{{LTL} Can Be More Succinct}}.
\newblock In: {\slshape \bibinfo{booktitle}{ATVA}}, {\slshape
  \bibinfo{series}{LNCS}} \bibinfo{volume}{6252},
  \bibinfo{publisher}{Springer}, pp. \bibinfo{pages}{245--258},
  \doi{10.1007/978-3-642-15643-4\_19}.

\bibitemdeclare{book}{McNaughtonStarFreeLanguages}
\bibitem{McNaughtonStarFreeLanguages}
\bibinfo{author}{Robert \surnamestart McNaughton\surnameend} \&
  \bibinfo{author}{Seymour \surnamestart Papert\surnameend}
  (\bibinfo{year}{1971}): \emph{\bibinfo{title}{Counter-free automata}}.
\newblock \bibinfo{publisher}{The MIT Press}.

\bibitemdeclare{inproceedings}{OuaknineRW09}
\bibitem{OuaknineRW09}
\bibinfo{author}{Jo{\"e}l \surnamestart Ouaknine\surnameend},
  \bibinfo{author}{Alexander \surnamestart Rabinovich\surnameend} \&
  \bibinfo{author}{James \surnamestart Worrell\surnameend}
  (\bibinfo{year}{2009}): \emph{\bibinfo{title}{Time-Bounded Verification}}.
\newblock In: {\slshape \bibinfo{booktitle}{CONCUR}}, {\slshape
  \bibinfo{series}{LNCS}} \bibinfo{volume}{5710},
  \bibinfo{publisher}{Springer}, pp. \bibinfo{pages}{496--510},
  \doi{10.1007/978-3-642-04081-8\_33}.

\bibitemdeclare{inproceedings}{Ouaknine2006}
\bibitem{Ouaknine2006}
\bibinfo{author}{Jo{\"e}l \surnamestart Ouaknine\surnameend} \&
  \bibinfo{author}{James \surnamestart Worrell\surnameend}
  (\bibinfo{year}{2006}): \emph{\bibinfo{title}{On Metric Temporal Logic and
  Faulty Turing Machines}}.
\newblock In: {\slshape \bibinfo{booktitle}{FoSSaCS}}, {\slshape
  \bibinfo{series}{LNCS}} \bibinfo{volume}{3921},
  \bibinfo{publisher}{Springer}, pp. \bibinfo{pages}{217--230},
  \doi{10.1007/11690634\_15}.

\bibitemdeclare{article}{OuaWor07}
\bibitem{OuaWor07}
\bibinfo{author}{Jo{\"e}l \surnamestart Ouaknine\surnameend} \&
  \bibinfo{author}{James \surnamestart Worrell\surnameend}
  (\bibinfo{year}{2007}): \emph{\bibinfo{title}{On the Decidability and
  Complexity of Metric Temporal Logic over Finite Words}}.
\newblock {\slshape \bibinfo{journal}{Logical Methods in Computer Science}}
  \bibinfo{volume}{3}(\bibinfo{number}{1}), \doi{10.2168/LMCS-3(1:8)2007}.

\bibitemdeclare{inproceedings}{PandyaS11}
\bibitem{PandyaS11}
\bibinfo{author}{Paritosh~K. \surnamestart Pandya\surnameend} \&
  \bibinfo{author}{Simoni~S. \surnamestart Shah\surnameend}
  (\bibinfo{year}{2011}): \emph{\bibinfo{title}{On Expressive Powers of Timed
  Logics: Comparing Boundedness, Non-punctuality, and Deterministic Freezing}}.
\newblock In: {\slshape \bibinfo{booktitle}{CONCUR}}, {\slshape
  \bibinfo{series}{LNCS}} \bibinfo{volume}{6901},
  \bibinfo{publisher}{Springer}, pp. \bibinfo{pages}{60--75},
  \doi{10.1007/978-3-642-23217-6\_5}.

\bibitemdeclare{article}{590a71927dcb4ba78bc7b58a4d9bfbe8}
\bibitem{590a71927dcb4ba78bc7b58a4d9bfbe8}
\bibinfo{author}{Dorit \surnamestart Pardo\surnameend} \&
  \bibinfo{author}{Alexander \surnamestart Rabinovich\surnameend}
  (\bibinfo{year}{2016}): \emph{\bibinfo{title}{No Future without (a hint of)
  Past: A Finite Basis for 'Almost Future' Temporal Logic}}.
\newblock {\slshape \bibinfo{journal}{Information and Computation}}
  \bibinfo{volume}{247}, pp. \bibinfo{pages}{203--216},
  \doi{10.1016/j.ic.2016.01.002}.

\bibitemdeclare{inproceedings}{Pnueli1977}
\bibitem{Pnueli1977}
\bibinfo{author}{Amir \surnamestart Pnueli\surnameend} (\bibinfo{year}{1977}):
  \emph{\bibinfo{title}{The temporal logic of programs}}.
\newblock In: {\slshape \bibinfo{booktitle}{FOCS}}, \bibinfo{publisher}{{IEEE}
  Computer Society}, pp. \bibinfo{pages}{46--57}, \doi{10.1109/SFCS.1977.32}.

\bibitemdeclare{article}{Rabinovich2010}
\bibitem{Rabinovich2010}
\bibinfo{author}{Alexander \surnamestart Rabinovich\surnameend}
  (\bibinfo{year}{2010}): \emph{\bibinfo{title}{Complexity of metric temporal
  logics with counting and the {Pnueli} modalities}}.
\newblock {\slshape \bibinfo{journal}{Theoretical Computer Science}}
  \bibinfo{volume}{411}(\bibinfo{number}{22--24}), pp.
  \bibinfo{pages}{2331--2342}, \doi{10.1016/J.TCS.2010.03.017}.

\bibitemdeclare{phdthesis}{Raskin1999}
\bibitem{Raskin1999}
\bibinfo{author}{Jean-Fran{\c c}ois \surnamestart Raskin\surnameend}
  (\bibinfo{year}{1999}): \emph{\bibinfo{title}{Logics, automata and classical
  theories for deciding real time}}.
\newblock Ph.D. thesis, \bibinfo{school}{FUNDP (Belgium)}.

\bibitemdeclare{article}{SchutzenbergerSyntacticMonoid}
\bibitem{SchutzenbergerSyntacticMonoid}
\bibinfo{author}{Marcel-Paul \surnamestart Sch{\"u}tzenberger\surnameend}
  (\bibinfo{year}{1965}): \emph{\bibinfo{title}{On finite monoids having only
  trivial subgroups}}.
\newblock {\slshape \bibinfo{journal}{Information and Control}}
  \bibinfo{volume}{8}, pp. \bibinfo{pages}{190--194},
  \doi{10.1016/S0019-9958(65)90108-7}.

\bibitemdeclare{article}{Shelah1975}
\bibitem{Shelah1975}
\bibinfo{author}{Saharon \surnamestart Shelah\surnameend}
  (\bibinfo{year}{1975}): \emph{\bibinfo{title}{The Monadic Theory of Order}}.
\newblock {\slshape \bibinfo{journal}{The Annals of Mathematics}}
  \bibinfo{volume}{102}(\bibinfo{number}{3}), p. \bibinfo{pages}{379},
  \doi{10.2307/1971037}.

\bibitemdeclare{inproceedings}{Wilke1994}
\bibitem{Wilke1994}
\bibinfo{author}{Thomas \surnamestart Wilke\surnameend} (\bibinfo{year}{1994}):
  \emph{\bibinfo{title}{Specifying timed state sequences in powerful decidable
  logics and timed automata}}.
\newblock In: {\slshape \bibinfo{booktitle}{FTRTFT}}, {\slshape
  \bibinfo{series}{LNCS}} \bibinfo{volume}{863}, \bibinfo{publisher}{Springer},
  pp. \bibinfo{pages}{694--715}, \doi{10.1007/3-540-58468-4\_191}.

\end{thebibliography}
\end{document}